\newcommand{\resp}{respectively}
\newcommand{\dfa}{\textrm{DFA}}
\newcommand{\dbia}{\textrm{DBiA}}
\newcommand{\fwd}{{\textnormal{fwd}}}
\newcommand{\bwd}{{\textnormal{bwd}}}
\newcommand{\abs}[1]{\left\lvert{#1}\right\rvert}
\newcommand{\symd}{\mathop{\bigtriangleup}}
\newcommand*{\qed}{\raisebox{0.5ex}[0ex][0ex]{\framebox[1ex][l]{}}}
\newtheorem{theorem}{Theorem}
\newtheorem{lemma}[theorem]{Lemma}
\newtheorem{corollary}[theorem]{Corollary}
\newtheorem{example}[theorem]{Example}
\newenvironment{proof}{%
  \par\noindent
  {\rmfamily\itshape\mdseries Proof\/}:\hspace{\labelsep}\ignorespaces}%
  {\mbox{}\nolinebreak\hfill~%
  {\qed}
  \medbreak
}
\title{More Structural Characterizations of Some Subregular Language
  Families by Biautomata}
\author{Markus Holzer and Sebastian Jakobi
\institute{Institut f\"ur Informatik, Universit\"at Giessen,\\
  Arndtstr. 2, 35392 Giessen, Germany}
\email{\{holzer,sebastian.jakobi\}@informatik.uni-giessen.de}
}
\begin{document}

\maketitle

\begin{abstract}
  We study structural restrictions on biautomata such as, e.g.,
  acyclicity, permutation-freeness, strongly permutation-freeness, and
  orderability, to mention a few. We compare the obtained language
  families with those induced by deterministic finite automata with
  the same property. In some cases, it is shown that there is no
  difference in characterization between deterministic finite automata
  and biautomata as for the permutation-freeness, but there are also
  other cases, where it makes a big difference whether one considers
  deterministic finite automata or biautomata. This is, for instance,
  the case when comparing strongly permutation-freeness, which results
  in the family of definite language for deterministic finite
  automata, while biautomata induce the family of finite and co-finite
  languages. The obtained results nicely fall into the known landscape
  on classical language families.
\end{abstract}

\section{Introduction}
\label{sec:introduction}

The finite automaton is one of the first and most intensely
investigated computational model in theoretical computer science, see,
e.g.,~\cite{Mi67}. Its systematic study led to a rich and unified
theory of regular subfamilies such as, for example, finite languages
(are accepted by acyclic finite automata---here, except for
non-accepting sink states, self-loops on states count as cycles),
ordered languages (where the transitions of the accepting automata
preserve an order on the state set), and star-free languages or
non-counting languages (which can be described by regular like
expressions using only union, concatenation, and complement or
equivalently by permutation-free finite automata), to mention a
few. Relations between several subregular language families, such as
those mentioned above, are summarized in~\cite{Ha69a}. In particular,
an extensive study of star-free regular languages can be found
in~\cite{McNaPa71}. Even nowadays the study of subregular language
families from different perspectives such as, for instance, algebra,
logic, descriptional, or computational complexity, is a vivid area of
research.

Recently, an alternative automaton model to the deterministic finite
automaton (\dfa), the so called \emph{biautomaton}
(\dbia)~\cite{KlPo12} was introduced. Roughly speaking, a biautomaton
consists of a \emph{deterministic} finite control, a read-only input
tape, and two reading heads, one reading the input from left to right
(forward transitions), and the other head reading the input from right
to left (backward transitions). Similar two-head finite automata
models were introduced, e.g., in~\cite{CDJM13a,Lo07,Ro67}. An input
word is accepted by a biautomaton, if there is an accepting
computation starting the heads on the two ends of the word meeting
somewhere in an accepting state.  Although the choice of reading a
symbol by either head is nondeterministic, a deterministic outcome of
the computation of the biautomaton is enforced by two properties: (i)
The heads read input symbols independently, i.e., if one head reads a
symbol and the other reads another, the resulting state does not
depend on the order in which the heads read these single letters. (ii)
If in a state of the finite control one head accepts a symbol, then
this letter is accepted in this state by the other head as well.
Later we call the former property the $\diamond$-property and the
latter one the $F$-property.  In~\cite{KlPo12} and a series of
forthcoming papers~\cite{HoJa13c,HoJa13a,JiKl12,KlPo12a} it was shown
that biautomata share a lot of properties with ordinary finite
automata. For instance, as minimal \dfa s, also minimal \dbia s are
unique up to isomorphism~\cite{Ar69,KlPo12}.

Now the question arises, which structural characterizations of
subregular language families of \dfa s carry over to biautomata. Let
us give an example which involves partially ordered automata. A \dfa\
with state set~$Q$ and input alphabet~$\Sigma$ is \emph{partially
  ordered}, if there is a (partial) order~$\leq$ on~$Q$ such that
$q\leq \delta(q,a)$, for every $q\in Q$
and~$a\in\Sigma$. In~\cite{BrFi80} it was shown that partially ordered
\dfa s characterize the family of $\mathcal{R}$-trivial regular
languages, that is, a regular language~$L$ is $\mathcal{R}$-trivial if
for its syntactic monoid~$M_L$, the assumption $sM_L=tM_L$ implies
$s=t$, for all $s,t\in M_L$. For the definition of the syntactic
monoid of a regular language we refer to~\cite{Ar69}. Adapting the
definition of being partially ordered literally to \dbia s results in
a characterization of the family of $\mathcal{J}$-trivial regular
languages~\cite{KlPo12a,KlPo12}---originally the authors
of~\cite{KlPo12} speak of acyclic biautomata instead, since loops are
not considered as cycles there; we think that the term \emph{partially
  ordered} is more suitable in this context. Here a regular
language~$L$ is $\mathcal{J}$-trivial if for its syntactic
monoid~$M_L$, the assumption $M_LsM_L=M_LtM_L$ implies $s=t$, for all
$s,t\in M_L$. Note that a language is $\mathcal{J}$-trivial regular if
and only if it is piecewise testable~\cite{Si75}. A
language~$L\subseteq\Sigma^*$ is \emph{piecewise testable} if it is a
finite Boolean combination of languages of the form
$\Sigma^*a_1\Sigma^*a_2\Sigma^*\ldots\Sigma^*a_n\Sigma^*$, where
$a_i\in\Sigma$ for $1\leq i\leq n$. We can also ask whether a transfer
of conditions can be done the other way around from \dbia s to \dfa
s. This is not that obvious, since structural properties on \dbia s
may involve conditions on the forward and backward transitions. For
instance, in~\cite{HoJa13c} it was shown that biautomata, where for
every state and every input letter the forward and the backward
transition go to the same state, characterize the family of
commutative regular languages. A regular language $L\subseteq\Sigma^*$
is \emph{commutative} if for all words $u,v\in\Sigma^*$ and letters
$a,b\in\Sigma$ we have $uabv\in L$ if and only if $ubav\in
L$. Obviously, this condition can be used also to give a structural
characterization of commutative regular languages on \dfa s, namely
that for every state~$q$ and letters $a,b\in\Sigma$ the finite state
device satisfies $\delta(\delta(q,a),b)=\delta(\delta(q,b),a)$. This
is the starting point of our investigations.

We study structural properties of \dfa s appropriately adapted to
\dbia s, since up to our knowledge most classical properties from the
literature on finite automata were not studied for \dbia s yet.  Our
investigation is started in Section~\ref{sec:permutation-automata}
with automata which transition functions induce permutations on the
state set. Originally permutation \dfa s were introduced
in~\cite{Th68a}. We show that both types of finite state machines,
permutation \dfa s and permutation \dbia s are equally powerful. Thus,
an alternative characterization of the family of $p$-regular languages
in terms of \dbia s is obtained. Next we take a closer look on quite
the opposite of permutation automata, namely on permutation-free
devices---see Section~\ref{sec:permutation-free}.  A special case of a
permutation-free automaton is an acyclic (expect for sink states)
one. It is easy to see that acyclic \dfa s as well as \dbia s
characterize the family of finite languages.
An important subregular language family, which can be obtained from
finite languages by finitely many applications of concatenation,
union, and complementation with respect to the underlying alphabet, is
the class of star-free languages.  It obeys a variety of different
characterizations~\cite{McNaPa71}, one of them are permutation-free
\dfa s. We show that permutation-free \dbia s characterize the
star-free languages, too. For strongly permutation-free automata,
which are automata that are permutation-free and where also the
identity permutation is forbidden, we find the first significant
difference of \dfa s and \dbia s. While for \dfa s this property
characterizes the family of definite languages, \dbia s describe only
finite or co-finite languages. A language~$L\subseteq\Sigma^*$ is
\emph{definite}~\cite{PRS63} if and only if $L=L_1\cup\Sigma^*L_2$,
for some finite languages~$L_1$ and~$L_2$.  Moreover, we find a
relation between strongly permutation-free automata, and automata
where all states are almost-equivalent---the notion of
almost-equivalence was introduced in~\cite{BGS09}.  Then in
Section~\ref{sec:ordered-automata} we continue our investigation with
another important subfamily of star-free languages, namely ordered
languages~\cite{ShTh74}. A \dfa\ with state set~$Q$ and input
alphabet~$\Sigma$ is \emph{ordered} if there is a total order~$\leq$
on the state set~$Q$ such that $p\leq q$ implies $\delta(p,a)\leq
\delta(q,a)$, for every $p,q\in Q$ and $a\in\Sigma$. The family of
ordered languages lies strictly in-between the family of finite and
the family of star-free languages. Appropriately adapting this
definition to biautomata results in a language class, which we call
the family of bi-ordered languages, that is a proper superset of the
family of finite and co-finite languages and a strict subset of the
family of ordered languages. Moreover, it is shown that there is a
subtle difference whether the order condition is applied to automata
in general or to minimal devices only. In the next to last section we
take a closer look on non-exiting and non-returning machines. It is
well known that non-exiting \dfa s characterize the family of
prefix-free languages, while non-returning automata are related to
suffix-free languages. We show that every biautomaton which is
non-exiting must also be non-returning (unless it accepts the empty
language), and that non-exiting minimal \dbia s characterize the
family of circumfix-free languages. For non-returning minimal \dbia s
we prove that the induced language family is a strict subset of the
family of non-returning minimal \dfa s languages. The obtained results
are summarized in Table~\ref{tab:results}.
\def\shade{\cellcolor[gray]{0.75}}%
\begin{table}
  \centering
  \begin{tabular}{|l||c|c|}\hline\hline
 & \multicolumn{2}{c|}{Automata type}\\\cline{2-3}
Property & \dfa  s & \dbia s \\\hline\hline
permutation & $p$-regular & \shade $p$-regular\\\hline
permutation-free & star-free & \shade star-free\\\hline 
ordered & ordered & \shade $\mbox{FIN}\cup\mbox{co-FIN}\subset\cdot\subset\mbox{ORD}$ \\\hline
partially ordered & $\mathcal{R}$-trivial & $\mathcal{J}$-trivial\\\hline 
strongly permutation-free & \shade definite & \shade finite and co-finite\\\hline
acyclic; self-loops are cycles & finite & \shade finite\\\hline
non-exiting & prefix-free & \shade circumfix-free\\\hline
non-returning & strict superset of suffix-free & \shade strict subset of \mbox{non-returning \dfa s} \\\hline\hline
  \end{tabular}
  \caption{Comparison of the results on structural properties on \dfa s and \dbia s and their induced language families (shading represents results obtained in this paper); here $\mbox{FIN}$ refers to the family of finite languages, $\mbox{co-FIN}$ to the family of co-finite languages, and $\mbox{ORD}$ to the family of ordered languages.}
  \label{tab:results}
\end{table}
In the last section we briefly discuss our findings and give some
hints on future research directions on the subject under
consideration.

\section{Preliminaries}
\label{sec:preliminaries}

A \emph{deterministic finite automaton} (\dfa) is a quintuple
$A=(Q,\Sigma,\delta,q_0,F)$, where~$Q$ is the finite set of
\emph{states}, $\Sigma$ is the finite set of \emph{input symbols},
$q_0\in Q$ is the \emph{initial state}, $F\subseteq Q$ is the set of
\emph{accepting states}, and $\delta\colon Q\times\Sigma\to Q$ is the
\emph{transition function}.  As usual, the transition
function~$\delta$ can be recursively extended to $\delta\colon
Q\times\Sigma^*\to Q$.  The \emph{language accepted} by~$A$ is defined
as $L(A) =\{\,w\in\Sigma^*\mid \delta(q_0,w)\in F\,\}$.

A \emph{deterministic biautomaton} (\dbia) is a sixtuple
$A=(Q,\Sigma,\cdot,\circ,q_0,F)$, where~$Q$, $\Sigma$, $q_0$, and~$F$
are defined as for \dfa s, and where~$\cdot$ and~$\circ$ are mappings
from~$Q\times\Sigma$ to~$Q$, called the \emph{forward} and
\emph{backward transition function}, respectively.  It is common in
the literature on biautomata to use an infix notation for these
functions, i.e., writing~$q\cdot a$ and~$q\circ a$ instead
of~$\cdot(q,a)$ and $\circ(q,a)$.  Similar as for the transition
function of a \dfa, the forward transition function~$\cdot$ can be
extended to $\cdot\colon Q\times\Sigma^* \to Q$ by 
$q\cdot \lambda = q$ and $q\cdot av = (q\cdot a)\cdot v$,
for all states~$q\in Q$, symbols~$a\in\Sigma$, and
words~$v\in\Sigma^*$. Here~$\lambda$ refers to the \emph{empty
  word}. The extension of the backward transition function~$\circ$ to
$\circ\colon Q\times\Sigma^* \to Q$ is defined as follows:
$q\circ \lambda = q$ and $q\circ va = (q\circ a)\circ v$, 
for all states~$q\in Q$, symbols~$a\in\Sigma$, and
words~$v\in\Sigma^*$.  Notice that~$\circ$ consumes the input from
right to left, hence the name backward transition function.

The \dbia~$A$ \emph{accepts} a word~$w\in\Sigma^*$ if there are
words $u_i,v_i\in\Sigma^*$, for~$1\leq i\leq k$, such that~$w$ can be
written as~$w=u_1u_2\dots u_k v_k \dots v_2 v_1$, and
\[((\dots((((q_0\cdot u_1)\circ v_1)\cdot u_2)\circ v_2)\dots)\cdot
u_k)\circ v_k \in F. \]
The language accepted by~$A$ is $L(A) = \{\, w\in\Sigma^* \mid
\text{$A$ accepts $w$}\,\}$.

The \dbia~$A$ has the \emph{$\diamond$-property}, if $(q\cdot a)\circ
b = (q\circ b)\cdot a$, for all~$a,b\in\Sigma$, and~$q\in Q$, and it
has the \emph{$F$-property}, if for all~$q\in Q$ and~$a\in\Sigma$ it
is~$q\cdot a \in F$ if and only if~$q\circ a \in F$.
The biautomata as introduced in~\cite{KlPo12} always had to satisfy
both these properties, while in~\cite{HoJa13c,HoJa13a} also biautomata
that lack one or both of these properties, as well as nondeterministic
biautomata were studied.
Throughout the current paper, when writing of biautomata, or \dbia s,
we always mean deterministic biautomata that satisfy both the
$\diamond$-property, and the $F$-property, i.e., the model as
introduced in~\cite{KlPo12}.
For such biautomata the following it is known from the
literature~\cite{HoJa13c,KlPo12}:
\begin{itemize}
\item $(q\cdot u)\circ v = (q\circ v)\cdot u$, for all states~$q\in Q$
  and words~$u,v\in\Sigma^*$,
\item $(q\cdot u)\circ vw \in F$ if and only if $(q\cdot uv)\circ w\in
  F$, for all states~$q\in Q$ and words~$u,v,w\in\Sigma^*$.
\end{itemize}
From this one can conclude that for all words $u_i,v_i\in\Sigma^*$,
with~$1\leq i\leq k$, it is 
\[((\dots((((q_0\cdot u_1)\circ v_1)\cdot u_2)\circ v_2)\dots)\cdot
u_k)\circ v_k \in F \]
if and only if
\[ q_0 \cdot u_1u_2\dots u_kv_k \dots v_2 v_1 \in F. \]
Therefore, the language accepted by a \dbia~$A$ can as well be defined
as $L(A) = \{\, w\in\Sigma^* \mid q_0\cdot w \in F\,\}$.

Let~$A$ be \dfa\ or a \dbia\ with state set~$Q$. We say that a state
$q\in Q$ is a \emph{sink state} if and only if all outgoing transition
(regardless whether they are forward or backward transitions) are
self-loops only. Note, that in particular, one can distinguish between
accepting and non-accepting sink states.

In the following we define the two \dfa s contained in a \dbia, which
accept the language, and the reversal of the language accepted by the
biautomaton.
Let $A=(Q,\Sigma,\cdot,\circ,q_0,F)$ be a \dbia.  We denote
by~$Q_\fwd$ the set of all states reachable from~$q_0$ by only using
forward transitions, and denote the set of states reachable by only
using backward transitions by~$Q_\bwd$, i.e.,
\begin{align*}
  Q_\fwd = \{\, q\in Q \mid \exists u\in\Sigma^*: q_0\cdot u = q\,\}\quad\mbox{and}\quad
  Q_\bwd = \{\, q\in Q \mid \exists v\in\Sigma^*: q_0\circ v = q\,\}.
\end{align*}
Now we define the \dfa\ $A_\fwd =
(Q_\fwd,\Sigma,\delta_\fwd,q_0,F_\fwd)$, with $F_\fwd = Q_\fwd \cap
F$, and $\delta_\fwd(q,a) = q\cdot a$, for all states~$q\in
Q_\fwd$ and symbols~$a\in\Sigma$.  Similarly, we define the \dfa\
$A_\bwd = (Q_\bwd,\Sigma,\delta_\bwd,q_0,F_\bwd)$, with $F_\bwd =
Q_\bwd \cap F$, and $\delta_\bwd(q,a)=q\circ a$, for all~$q\in Q$
and~$a\in\Sigma$.
One readily sees that~$L(A_\fwd)=L(A)$.  Moreover, since~$q\circ uv=
(q\circ v)\circ u$, one can also see~$L(A_\bwd)=L(A)^R$.
It is shown in~\cite{HoJa14} that if~$A$ is a minimal biautomaton,
then the two \dfa s~$A_\fwd$ and~$A_\bwd$ are minimal, too.

\section{Permutation Automata}
\label{sec:permutation-automata}

First we study automata where every input induces a permutation on the
state set.  Such finite automata were defined in~\cite{Th68a}.  A
\dfa\ $A=(Q,\Sigma,\delta,q_0,F)$ is a \emph{permutation \dfa} if
$\delta(p,a) = \delta(q,a)$ implies~$p=q$, for all~$p,q\in Q$
and~$a\in\Sigma$.  A regular language is \emph{$p$-regular} if it is
accepted by a permutation \dfa.
We give a similar definition for biautomata: a biautomaton
$A=(Q,\Sigma,\cdot,\circ,q_0,F)$ is a \emph{permutation biautomaton}
if for all~$p,q\in Q$ and~$a\in \Sigma$ we have that $p\cdot a =
q\cdot a$ implies~$p=q$, and also $p\circ a = q\circ a$ implies~$p=q$.

We will see that a language is $p$-regular if and only if it is
accepted by a permutation biautomaton.  Before we can show this, we
describe a useful technique to construct a biautomaton from finite
automata.
In~\cite{KlPo12} a construction of a biautomaton from a given \dfa~$A$
is described, that uses a cross-product construction of~$A$ with the
power-set automaton of the reversal of~$A$.
In the following we describe how a biautomaton can be constructed from
two arbitrary \dfa s accepting a regular language and its reversal.

Let~$L\subseteq\Sigma^*$ be a regular language, and for~$i\in\{1,2\}$
let~$A_i=(Q_i,\Sigma,\delta_i,q_0^{(i)},F_i)$ be \dfa s
with~$L(A_1)=L$, and~$L(A_2)=L^R$.  Further, for all states~$p\in Q_1$
let~$u_p$ be some word with~$\delta_1(q_0^{(1)},u_p)= p$, and
similarly for~$q\in Q_2$ let~$v_q$ be a word
with~$\delta_2(q_0^{(2)},v_q)= q$.
Then define the automaton $B_{A_1\times A_2} =
(Q,\Sigma,\cdot,\circ,q_0,F)$ with state set~$Q=Q_1\times Q_2$,
initial state~$q_0=(q_0^{(1)},q_0^{(2)})$, accepting states
$F=\{\,(p,q)\in Q \mid u_pv_q^R \in L\,\}$, and where for
all~$(p,q)\in Q$ and~$a\in \Sigma$ we have $(p,q)\cdot a =
(\,\delta_1(p,a),\,q\,)$, and $(p,q)\circ a =
(\,p,\,\delta_2(q,a)\,)$.  The following lemma proves the correctness
of this construction.

\begin{lemma}\label{lem:cross-product-dbia}
  For~$i\in\{1,2\}$ let~$A_i=(Q_i,\Sigma,\delta_i,q_0^{(i)},F_i)$ be
  \dfa s with~$L(A_1)=L$, and~$L(A_2)=L^R$.  Then~$B_{A_1\times A_2}$
  is a deterministic biautomaton, such that~$L(B_{A_1\times A_2}) =
  L$.
\end{lemma}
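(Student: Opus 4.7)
The plan is to verify three things about $B_{A_1\times A_2}$: that it satisfies the $\diamond$-property, that it satisfies the $F$-property, and that its accepted language is exactly $L$. Throughout the argument, I would rely on the Nerode-style observation that for any word $w$ with $\delta_1(q_0^{(1)},w)=p$, the words $w$ and $u_p$ are indistinguishable by $A_1$, and analogously for $A_2$, $v_q$ is indistinguishable from any other word reaching $q$. Combined with $L(A_1)=L$ and $L(A_2)=L^R$, these observations transfer easily into statements about membership in $L$.

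The $\diamond$-property is immediate by direct computation: both $((p,q)\cdot a)\circ b$ and $((p,q)\circ b)\cdot a$ equal $(\delta_1(p,a),\delta_2(q,b))$, because forward transitions only affect the first component and backward transitions only affect the second.

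The $F$-property is the main obstacle, since it couples the two components through a nontrivial acceptance condition. I would show that both $(p,q)\cdot a\in F$ and $(p,q)\circ a\in F$ are equivalent to the single condition $u_p\,a\,v_q^R\in L$. For the forward side, $(p,q)\cdot a=(\delta_1(p,a),q)$, which lies in $F$ iff $u_{\delta_1(p,a)}v_q^R\in L$; since $u_{\delta_1(p,a)}$ and $u_p a$ both reach $\delta_1(p,a)$ in $A_1$, they are $L$-equivalent and the condition becomes $u_p a v_q^R\in L$. For the backward side, $(p,q)\circ a=(p,\delta_2(q,a))$, which lies in $F$ iff $u_p v_{\delta_2(q,a)}^R\in L$; taking reversals and using that $L(A_2)=L^R$ together with the fact that $v_{\delta_2(q,a)}$ and $v_q a$ are $L^R$-equivalent in $A_2$, this rewrites to $v_q a u_p^R\in L^R$, i.e., $u_p a v_q^R\in L$. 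Hence both conditions coincide.

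Finally, I would invoke the simplified acceptance criterion stated in the preliminaries, $L(B_{A_1\times A_2})=\{w\in\Sigma^*\mid q_0\cdot w\in F\}$, which is legitimate once the $\diamond$- and $F$-properties are in place. Computing $q_0\cdot w=(\delta_1(q_0^{(1)},w),q_0^{(2)})$, membership in $F$ amounts to $u_{\delta_1(q_0^{(1)},w)}\,v_{q_0^{(2)}}^R\in L$. Using that $u_{\delta_1(q_0^{(1)},w)}$ is $L$-equivalent to $w$ in $A_1$, and that $v_{q_0^{(2)}}$ is $L^R$-equivalent to $\lambda$ in $A_2$ (both reach $q_0^{(2)}$ from itself), this condition collapses to $w\in L$. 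This yields $L(B_{A_1\times A_2})=L$ and completes the proof.
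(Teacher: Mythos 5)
Your proof is correct. The paper actually states Lemma~\ref{lem:cross-product-dbia} without giving a proof, so there is nothing to compare against line by line; your argument is the natural one, and all three steps check out: the $\diamond$-property is immediate since the two transition functions act on disjoint components, your reduction of both sides of the $F$-property to the single condition $u_p a v_q^R\in L$ via the Nerode-style interchangeability of $u_{\delta_1(p,a)}$ with $u_p a$ (and of $v_{\delta_2(q,a)}$ with $v_q a$, after reversal) is exactly what makes the definition of $F$ well behaved, and the final computation correctly invokes the simplified acceptance criterion $L(B)=\{w\mid q_0\cdot w\in F\}$, which the preliminaries justify only once both properties are established. The one point worth making explicit is that the words $u_p$ and $v_q$ are only guaranteed to exist for reachable states, but that is an assumption already built into the paper's construction rather than a gap in your argument.
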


Besides its usefulness for our result on permutation biautomata, this
construction is also of relevance from a descriptional complexity
point of view.
Using the construction from~\cite{KlPo12} on an $n$-state \dfa\ yields
a biautomaton with~$n\cdot 2^n$ states.  In fact, a precise analysis
in~\cite{JiKl12} that uses a similar construction as in~\cite{KlPo12}
proves a tight bound of $n\cdot 2^n - 2(n-1)$ states for converting an
$n$-state \dfa\ into an equivalent biautomaton.  However, this bound
only takes into account the state complexity of the original
language~$L$, but not the state complexity of~$L^R$.  If the state
complexity of~$L^R$ much smaller than~$2^n$ then the bound
from~\cite{JiKl12} is far off the number of states of the minimal
biautomaton for~$L$.  Using our construction, we can deduce an upper
bound of~$n\cdot m$ for the number of states of a biautomaton for the
language~$L$, if~$n$ is the state complexity of~$L$, and~$m$ is the
state complexity of~$L^R$.

Now we show our result on permutation automata.

\begin{theorem}
  A language is $p$-regular if and only if it is accepted by some
  permutation biautomaton.
\end{theorem}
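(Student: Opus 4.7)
The plan is to prove both directions, using the cross-product construction of Lemma~\ref{lem:cross-product-dbia} for one direction and the forward-DFA extraction for the other.

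For the ``only if'' direction, assume $L$ is $p$-regular. First I would argue that the family of $p$-regular languages is closed under reversal: this is the classical fact that $p$-regular languages are exactly the group languages (languages whose syntactic monoid is a group), and this property is preserved by reversal since the syntactic monoid of $L^R$ is the opposite monoid of $M_L$, and the opposite of a group is a group. Hence there are permutation \dfa s $A_1$ for $L$ and $A_2$ for $L^R$. Now I would apply the construction of Lemma~\ref{lem:cross-product-dbia} and form the biautomaton $B_{A_1\times A_2}$. It remains to check that the forward and backward transitions are permutations on the product state set $Q_1\times Q_2$. For any fixed $a\in\Sigma$, the forward transition sends $(p,q)$ to $(\delta_1(p,a),q)$, which is a bijection on $Q_1\times Q_2$ precisely because $\delta_1(\cdot,a)$ is a bijection on $Q_1$; an analogous argument for the backward transition finishes this direction.

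For the ``if'' direction, assume $A=(Q,\Sigma,\cdot,\circ,q_0,F)$ is a permutation biautomaton accepting $L$, and consider the \dfa\ $A_\fwd$ from the preliminaries, which satisfies $L(A_\fwd)=L$. I would show that $A_\fwd$ is a permutation \dfa. For each letter $a$, the map $q\mapsto q\cdot a$ is injective on the finite set $Q$, hence a permutation of $Q$. By definition, $Q_\fwd$ is closed under forward transitions, so the restriction of this map to $Q_\fwd$ is an injection $Q_\fwd\to Q_\fwd$; since $Q_\fwd$ is finite, this restriction is a bijection, i.e., a permutation. Therefore $\delta_\fwd(\cdot,a)$ is a permutation on $Q_\fwd$ for every $a\in\Sigma$, so $A_\fwd$ is a permutation \dfa\ and $L$ is $p$-regular.

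The two potential subtleties are (i) justifying closure of $p$-regular languages under reversal, which I would handle by either citing the group-language characterization or sketching a direct syntactic-monoid argument, and (ii) making sure the restriction-to-$Q_\fwd$ step genuinely yields a permutation; this relies on the elementary observation that an injective self-map of a finite set is surjective, combined with the fact that $Q_\fwd$ is forward-closed. Neither step requires heavy machinery, and everything else reduces to directly invoking Lemma~\ref{lem:cross-product-dbia} and unwinding the definition of a permutation biautomaton.
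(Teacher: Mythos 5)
Your proposal is correct and follows essentially the same route as the paper: extract $A_\fwd$ for one direction, and for the other use closure of $p$-regular languages under reversal together with the cross-product construction of Lemma~\ref{lem:cross-product-dbia}, checking that each letter acts bijectively on $Q_1\times Q_2$. The only cosmetic difference is that the paper simply cites Thierrin for reversal-closure where you sketch the group-language argument, and you spell out the (easy) injectivity-on-$Q_\fwd$ step that the paper leaves implicit.
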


\begin{proof}
  If~$A$ is a permutation biautomaton, then~$A_\fwd$ is a permutation
  \dfa, hence~$L(A)$ is $p$-regular.  For the reverse implication
  let~$L$ be some $p$-regular language over the alphabet~$\Sigma$.
  Then~$L^R$ is $p$-regular, too~\cite{Th68a}, so there are
  permutation \dfa s~$A_i=(Q_i,\Sigma,\delta_i,q_0^{(i)},F_i)$,
  for~$i=1,2$, that~$L=L(A_1)$, and~$L^R=L(A_2)$.  Using the
  cross-product construction from Lemma~\ref{lem:cross-product-dbia},
  we obtain the biautomaton $B=B_{A_1\times A_2}$.  Recall that the
  states of~$B$ are of the form~$(p,q)$, with~$p\in Q_1$ and~$q\in
  Q_2$, and the transitions are defined such that $(p,q)\cdot a =
  (\delta_1(p,a),q)$, and $(p,q)\circ a = (p,\delta_2(q,a))$, for all
  states~$(p,q)$ and symbols~$a\in\Sigma$.  We will show in the
  following that~$B$ is a permutation automaton.  Therefore
  let~$(p,q)$ and~$(p',q')$ be two states of~$B$, and~$a\in\Sigma$.
  If $(p,q)\cdot a = (p',q')\cdot a$ then $(\delta(p,a),q) =
  (\delta(p',a),q')$, which implies~$\delta(p,a)=\delta(p',a)$
  and~$q=q'$.  Since~$A_1$ is a permutation \dfa, we also
  obtain~$p=p'$, hence~$(p,q)=(p',q')$.  With a similar reasoning,
  using the permutation property of~$A_2$, we see that also
  $(p,q)\circ a = (p',q')\circ a$ implies~$(p,q)=(p',q')$,
  therefore~$B$ is a permutation biautomaton.
\end{proof}

\section{Permutation-Free Automata}
\label{sec:permutation-free}

An important subregular language family is the class of star-free
languages. A language is star-free if it can be obtained from finite
languages by finitely many applications of concatenation, union, and
complementation with respect to the underlying alphabet.  For the
class of finite languages we have the following obvious theorem, which
we state without proof.

\begin{theorem}
  A language is finite (co-finite, \resp) if and only if its minimal
  biautomaton is acyclic except for non-accepting (accepting, \resp)
  sink states; self-loops count as cycles.\hfill\qed
\end{theorem}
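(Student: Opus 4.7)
The plan is to exploit the result of \cite{HoJa14} that the DFAs $A_\fwd$ and $A_\bwd$ induced by a minimal \dbia\ $A$ are themselves minimal, which reduces the statement to the classical characterization of finite languages by minimal acyclic DFAs. For the easy implication, suppose the minimal \dbia\ $A$ is acyclic except for non-accepting sink states (with self-loops counted); then every cycle of $A_\fwd$ is a cycle of $A$ made of only forward transitions, hence a self-loop at a non-accepting sink of $A$, which remains a non-accepting sink of $A_\fwd$. Consequently $A_\fwd$ is acyclic except for a non-accepting sink, and $L(A)=L(A_\fwd)$ is therefore finite by the classical DFA result.

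For the reverse implication, assume $L$ is finite and consider the minimal \dbia\ $A$. Every state is of the form $q=q_0\cdot u\circ v$ for some words $u,v$, and by the acceptance equivalence $(q_0\cdot x)\circ yz\in F$ iff $(q_0\cdot xy)\circ z\in F$ stated in the preliminaries, the forward-residual $\{w\mid q\cdot w\in F\}$ equals $\{w\mid uwv\in L\}$. By minimality, distinct states have distinct such residuals, so the unique state with empty residual, if it exists, is the dead state, and it is automatically a non-accepting sink because every forward or backward transition from it must again have empty residual. Now assume, for a contradiction, that some non-sink state $q$ of $A$ lies on a cycle. Iterated applications of the $\diamond$-property let us commute forward past backward transitions and normalize the cycle to $q\cdot u_c\circ v_c=q$ with $u_cv_c\ne\lambda$. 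Extending on the right by an arbitrary word $w$ and invoking $\diamond$ together with the acceptance equivalence, one obtains $u u_c^n w v_c^n v\in L$ iff $uwv\in L$ for every $n\ge 0$. Because $q$ is not a sink its residual is non-empty, so a witness $w_0$ with $uw_0v\in L$ exists and yields infinitely many pairwise distinct words in $L$, contradicting finiteness.

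The co-finite case follows by complementation: the minimal \dbia\ for $\Sigma^*\setminus L$ is obtained from that of $L$ by swapping $F$ and $Q\setminus F$, which turns the unique non-accepting sink into a unique accepting sink while leaving the transition structure, and thus all cycles, unchanged. The main obstacle is the cycle-normalization step: a generic cycle is an arbitrary alternating sequence of forward and backward transitions, and careful use of the $\diamond$-property, possibly by induction on the length of the cycle, is needed to rewrite it in the canonical form $q\cdot u_c\circ v_c=q$ without altering the endpoint.
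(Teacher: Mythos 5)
The paper states this theorem explicitly without proof (it is introduced as ``obvious''), so there is no official argument to compare yours against; judged on its own merits, your proof is correct. The forward direction via $A_\fwd$ is immediate, and for the converse your key observations --- that in a minimal biautomaton the state $(q_0\cdot u)\circ v$ is determined by its right language $\{\,w \mid uwv\in L\,\}$, that any cycle normalizes via the $\diamond$-property to $(q\cdot u_c)\circ v_c=q$ with $u_cv_c\neq\lambda$ (single-letter forward and backward steps commute, so the forward letters can be pulled to the front and the backward letters collected in reverse order), and that pumping such a cycle at a state with non-empty residual yields the infinitely many words $uu_c^nw_0v_c^nv\in L$ of pairwise distinct lengths --- are all valid; the complementation step for the co-finite case is legitimate because swapping $F$ and $Q\setminus F$ preserves the $\diamond$- and $F$-properties, reachability, and distinctness of right languages, hence minimality. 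One small repair: you derive a contradiction only for a \emph{non-sink} state on a cycle, but the statement also requires that no \emph{accepting} sink carries a (self-loop) cycle. This is covered by the same pumping argument, since an accepting sink has residual $\Sigma^*\neq\emptyset$; equivalently, run your contradiction for every state with non-empty residual rather than every non-sink state, which shows that every state on a cycle has empty residual and is therefore the unique non-accepting dead sink.
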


The class of star-free languages obeys a variety of different
characterizations~\cite{McNaPa71}, one of them being the following: a
regular language is star-free if its minimal \dfa\ is
permutation-free.  A \dfa\ $A=(Q,\Sigma,\delta,q_0,F)$ is
\emph{permutation-free} if there is no word~$w\in\Sigma^*$ such that
the mapping $q\mapsto \delta(q,w)$, for all~$q\in Q$, induces a
\emph{non-trivial} permutation, i.e., a permutation different from the
identity permutation, on some set~$P\subseteq Q$.
Now the question arises whether a similar condition for biautomata
also yields a characterization of the star-free languages.  We feel
that the following definition of permutation-freeness is a natural
extension from the corresponding definition for \dfa s.  We say that a
biautomaton $A=(Q,\Sigma,\cdot,\circ,q_0,F)$ is
\emph{permutation-free} if there are no words~$u,v\in\Sigma^*$, such
that the mapping $q\mapsto (q\cdot u)\circ v$ induces a non-trivial
permutation on some set of states~$P\subseteq Q$.  It turns out that
with this definition, permutation-free biautomata indeed characterize
the star-free languages.  We will later discuss some other possible
definitions.  Before we show our result on permutation-free
biautomata, we prove the following lemma which helps us to relate
permutations in biautomata to permutations in \dfa s.

\begin{lemma}\label{lem:permutation-lemma}
  Let~$Q$ be a finite set and~$\pi_1,\pi_2\colon Q\to Q$ be two
  mappings satisfying $\pi_1(\pi_2(q)) = \pi_2(\pi_1(q))$ for all
  $q\in Q$.  If there exists a subset~$P\subseteq Q$ such that the
  mapping $\pi\colon P\to P$ defined by $\pi(p) = \pi_2(\pi_1(p))$ is
  a non-trivial permutation on~$P$, then there exists a
  subset~$P'\subseteq Q$ and an integer~$d\geq 1$ such that~$\pi_1^d$
  or~$\pi_2^d$ is a non-trivial permutation on~$P'$.
\end{lemma}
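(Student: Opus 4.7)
The plan is to argue by contradiction, reducing everything to the structure of a self-map on a finite set.

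For any self-map $\alpha\colon Q\to Q$, let $Q_\alpha=\bigcap_{i\geq 0}\alpha^i(Q)$ denote its eventual image; this set is stable under $\alpha$, and $\alpha|_{Q_\alpha}$ is a bijection. The first step is to establish the following equivalence: there exist $d\geq 1$ and $P'\subseteq Q$ with $\alpha^d|_{P'}$ a non-trivial permutation of $P'$ if and only if $\alpha|_{Q_\alpha}$ is not the identity on $Q_\alpha$. Sufficiency follows by taking $d=1$ and $P'=Q_\alpha$. For necessity, observe that if $\alpha^d(P')=P'$, then every $p\in P'$ lies in $\alpha^{dn}(Q)$ for all $n\geq 0$, forcing $P'\subseteq Q_\alpha$, so $\alpha^d|_{P'}$ coincides with the identity.

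Suppose for contradiction that the conclusion of the lemma fails; the equivalence just stated then yields $\pi_1|_{Q_{\pi_1}}=\mathrm{id}_{Q_{\pi_1}}$ and $\pi_2|_{Q_{\pi_2}}=\mathrm{id}_{Q_{\pi_2}}$. Write $\pi=\pi_2\pi_1$ and let $Q_\pi$ be its eventual image. Since $\pi|_P$ is a permutation, $\pi(P)=P$, so iteration gives $P=\pi^i(P)\subseteq\pi^i(Q)$ for every $i$, whence $P\subseteq Q_\pi$.

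The key step is then to show $Q_\pi\subseteq Q_{\pi_1}\cap Q_{\pi_2}$; this is the only place where commutativity is invoked. From $\pi_1\pi_2=\pi_2\pi_1$ we obtain $\pi^N=\pi_1^N\pi_2^N=\pi_2^N\pi_1^N$ for every $N\geq 1$, so any $q\in Q_\pi\subseteq\pi^N(Q)$ lies in both $\pi_1^N(Q)$ and $\pi_2^N(Q)$; intersecting over $N$ yields $q\in Q_{\pi_1}\cap Q_{\pi_2}$. Consequently, for each $q\in Q_\pi$ we have $\pi_1(q)=q$ and $\pi_2(q)=q$ by the contradiction hypothesis, hence $\pi(q)=\pi_2(\pi_1(q))=q$. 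Thus $\pi$ is the identity on $Q_\pi\supseteq P$, contradicting that $\pi|_P$ is non-trivial. I expect the main obstacle to be the preliminary equivalence together with the inclusion $Q_\pi\subseteq Q_{\pi_1}\cap Q_{\pi_2}$; once both are in place the contradiction is immediate.
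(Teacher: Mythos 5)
Your proof is correct, but it follows a genuinely different route from the paper's. The paper argues locally on the orbit of the set~$P$ under~$\pi_1$: by finiteness the sequence $\pi_1^0(P),\pi_1^1(P),\dots$ is eventually periodic with some period~$d$, and commutativity together with $P=\pi^{m+d}(P)=\pi_2^{m+d}(\pi_1^{m+d}(P))$ yields $\pi_2^d(P)=P$ and then $\pi_1^d(P)=P$; if both of these restrictions to~$P$ are trivial, the paper notes that $d\geq 2$, that $P\neq\pi_1(P)$ forces some $q\in P$ with $\pi_1(q)\neq q$, and exhibits the explicit non-trivial cycle $P'=\{q,\pi_1(q),\dots,\pi_1^{d-1}(q)\}$ for~$\pi_1$. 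You instead work globally with the eventual images $Q_\alpha=\bigcap_{i\geq 0}\alpha^i(Q)$, prove the clean equivalence ``some power of~$\alpha$ is a non-trivial permutation on some subset iff $\alpha$ is not the identity on~$Q_\alpha$,'' and use $\pi^N=\pi_1^N\pi_2^N=\pi_2^N\pi_1^N$ to obtain $P\subseteq Q_\pi\subseteq Q_{\pi_1}\cap Q_{\pi_2}$, whence the contradiction hypothesis forces $\pi$ to be the identity on~$P$. All the steps check out (the stabilization of the chain $\alpha^i(Q)$ and the fact that a permutation of~$P'$ forces $P'\subseteq Q_\alpha$ are exactly right). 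What your approach buys is the elimination of the paper's case analysis and of the explicit cycle construction, plus a reusable characterization of when powers of a self-map admit non-trivial permutations; what the paper's approach buys is constructiveness, since it pins down a concrete exponent~$d$ and witness set (indeed $P'=P$ unless both $\pi_1^d$ and $\pi_2^d$ act trivially on~$P$).
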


\begin{proof}
  Consider the sequence of sets $\pi_1^0(P), \pi_1^1(P), \pi_1^2(P),
  \ldots \subseteq Q$.  Since~$Q$ is a finite set, the number of
  different sets~$\pi_1^j(P)$, for~$j\geq 0$, is finite.  Thus, there
  must be integers~$m,d\geq 1$ such that the sets $\pi_1^0(P),
  \pi_1^1(P), \dots \pi_1^{m+d-1}(P)$ are pairwise distinct,
  and~$\pi_1^{m+d}(P) = \pi_1^m(P)$.  Since~$\pi =
  \pi_1\pi_2=\pi_2\pi_1$ is a permutation on~$P$, we obtain
  \[ P=\pi^{m+d}(P) = \pi_2^{m+d}(\pi_1^{m+d}(P)) = \pi_2^d ( \pi_2^m
  ( \pi_1^m (P))) = \pi_2^d (P), \]
  which shows that~$\pi_2^d$ is a permutation on~$P$.  It follows that
  also~$\pi_1^d$ must be a permutation on~$P$.  If one of these is a
  non-trivial permutation we are done.  
  Therefore assume that both~$\pi_1^d$ and~$\pi_2^d$ are the identity
  permutation on~$P$.  In this case it must be~$d\geq 2$ because
  otherwise the permutation~$\pi=\pi_1\pi_2$ would be trivial.  Then
  the two sets~$P$ and~$\pi_1(P)$ are different, so there is an
  element~$q\in P$ with~$\pi_1(q)\neq q$.  On the other hand~$q$ must
  satisfy~$\pi_1^d(q) = q$.  Therefore the mapping~$\pi_1$ is a
  permutation on the set
  $P'=\{\pi_1^0(p),\pi_1^1(p),\dots,\pi_1^{d-1}(p)\}$, and it is
  non-trivial because~$\pi_1^0(p) = p \neq \pi_1^1(p)$.
\end{proof}

Now we can show the following characterization of star-free languages
in terms of permutation-free biautomata.

\begin{theorem}\label{thm:star-free-permutation-free}
  A language is star-free if and only if its \emph{minimal}
  biautomaton is permutation-free.
\end{theorem}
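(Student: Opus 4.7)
The plan is to prove both implications separately. The $(\Leftarrow)$ direction is the easy one: if the minimal biautomaton $A$ is permutation-free, then $A_\fwd$ must be permutation-free as well, since any word~$w$ inducing a non-trivial permutation on some $P_\fwd\subseteq Q_\fwd$ would, by taking $u=w$ and $v=\lambda$, yield the non-trivial permutation $q\mapsto(q\cdot u)\circ v = q\cdot w$ on $P_\fwd\subseteq Q$ in~$A$. Since $A_\fwd$ is the minimal \dfa\ for~$L$, the classical characterization recalled at the start of this section makes~$L$ star-free.

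For $(\Rightarrow)$, I assume~$L$ is star-free and derive a contradiction from a hypothetical non-trivial permutation $q\mapsto(q\cdot u)\circ v$ on some $P\subseteq Q$. The $\diamond$-property makes the maps $\pi_1\colon q\mapsto q\cdot u$ and $\pi_2\colon q\mapsto q\circ v$ commute on~$Q$, so Lemma~\ref{lem:permutation-lemma} supplies $d\geq 1$ and $P'\subseteq Q$ on which $\cdot u^d$ or $\circ v^d$ acts as a non-trivial permutation. I treat the case of $\cdot u^d$, the other being symmetric with $L^R$ and $A_\bwd$ playing the roles of~$L$ and~$A_\fwd$, and pick some $q\in P'$ whose orbit under $\cdot u^d$ restricted to~$P'$ has size $k\geq 2$.

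The main obstacle is that $P'$ need not be contained in $Q_\fwd$, so permutation-freeness of~$A_\fwd$ cannot be exported to $P'$ directly. To bridge this gap I use the structure of the minimal biautomaton: every reachable state has the form $q_0\cdot w_1\circ w_2$ and, by the $\diamond$- and $F$-properties, its forward future equals $\{\,w\mid w_1ww_2\in L\,\}$, so distinct states of the minimal biautomaton correspond to distinct two-sided quotients of~$L$. Permutation-freeness of~$A_\fwd$ together with the finiteness of $Q_\fwd$ yields an integer~$N$ with $\cdot u^{dN}=\cdot u^{d(N+1)}$ on~$Q_\fwd$; this says $w_1u^{dN}$ and $w_1u^{d(N+1)}$ reach the same state of~$A_\fwd$, and hence $w_1u^{dN}ww_2\in L$ iff $w_1u^{d(N+1)}ww_2\in L$ for all $w,w_1,w_2$. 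Consequently $q\cdot u^{dN}=q\cdot u^{d(N+1)}$ in~$A$ for every $q\in Q$, so $q\cdot u^{dN}$ is a fixed point of~$\cdot u^d$ lying inside the orbit of~$q$; since this orbit has size $k\geq 2$ under the permutation $\cdot u^d$ on~$P'$, no element of it can be fixed, giving the desired contradiction.
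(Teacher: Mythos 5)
Your proof is correct and follows the paper's proof almost step for step: the easy direction goes through $A_\fwd$ and the classical \dfa\ characterization, and the hard direction is by contraposition, splitting the permutation $q\mapsto (q\cdot u)\circ v$ into the commuting maps $\pi_1,\pi_2$ and invoking Lemma~\ref{lem:permutation-lemma}. The one genuine difference is the step you flag as the main obstacle. The paper simply asserts that a non-trivial permutation of $q\mapsto q\cdot u^d$ on $P'\subseteq Q$ means that $u^d$ induces a non-trivial permutation in the minimal \dfa~$A_\fwd$, even though $P'$ need not be contained in $Q_\fwd$; implicitly this rests on the forward action on all of~$Q$ realizing the same transition monoid as the action on~$Q_\fwd$. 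Your argument makes this explicit and correct: permutation-freeness of $A_\fwd$ yields an~$N$ with $q\cdot u^{dN}=q\cdot u^{d(N+1)}$ for all $q\in Q_\fwd$, hence (via the identification of states of the minimal biautomaton with two-sided quotients, and their distinguishability) for all $q\in Q$, which is incompatible with an orbit of size at least two under $q\mapsto q\cdot u^d$ inside~$P'$. The symmetric case via $L^R$ and $A_\bwd$ is handled exactly as in the paper, using closure of star-free languages under reversal. So your write-up is, if anything, more complete than the paper's at the crucial point, and everything you invoke (reachability of every state of a minimal biautomaton in the form $(q_0\cdot w_1)\circ w_2$, and minimality forcing distinct states to have distinct right languages) is available from the preliminaries.
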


\begin{proof}
  Clearly, if~$A$ is a minimal biautomaton that is permutation-free,
  then also the contained minimal \dfa~$A_\fwd$ is permutation-free,
  too.  Therefore the language~$L(A)$ is star-free.  

  For proving the reverse implication let
  $A=(Q,\Sigma,\cdot,\circ,q_0,F)$ be a minimal biautomaton that is
  \emph{not} permutation-free.  Then there are words~$u,v\in\Sigma^*$,
  and a set of states~$P\subseteq Q$, with~$\abs P \geq 2$, such that
  the mapping~$\pi\colon Q\to Q$ defined as $\pi(q) = (q\cdot u)\circ
  v$ induces a non-trivial permutation on~$P$.  Notice that the
  $\diamond$-property of the biautomaton~$A$ implies
  $\pi(q)=\pi_1(\pi_2(q))=\pi_2(\pi_1(q))$, for $\pi_1(q) = q\cdot u$,
  and $\pi_2(q) = q\circ v$.  Therefore we can use
  Lemma~\ref{lem:permutation-lemma}, and obtain an integer~$d\geq 1$
  such that~$\pi_1^d$ or~$\pi_2^d$ induces a non-trivial permutation
  on some subset~$P'\subseteq Q$.  If~$\pi_1^d$ is non-trivial, then
  the word~$u^d$ induces a non-trivial permutation in the minimal
  \dfa~$A_\fwd$, which in turn means that the language~$L(A)$ is
  \emph{not} star-free.  Otherwise, the mapping~$\pi_2^d$ is
  non-trivial, and the word~$v^d$ induces a non-trivial permutation on
  the minimal \dfa~$A_\bwd$, which means that the language~$L(A)^R$ is
  not star-free.  Since the class of star-free languages is closed
  under reversal, we again conclude that the language~$L(A)$ cannot be
  star-free in this case.
\end{proof}

In the above definition of permutation-free biautomata, from all
states in the permutation induced by the word~$uv$, the prefix~$u$
must be read with forward transitions and the suffix~$v$ with backward
transitions.  
One could also think of other kinds of permutations in biautomata, and
we shortly discuss two different such notions in the following.  Since
a permutation is composed of cycles, we describe the types of cycles.
Let $P=\{p_0, p_1, \dots, p_{k-1}\}$ be some set of states of a
biautomaton~$A$ and~$w$ be some non-empty word over the input alphabet
of~$A$.
\begin{itemize}
\item We say that~$w$ induces a \emph{word-cycle} on~$P$ if for $0\leq
  i\leq k-1$ we have $p_{(i+1)\bmod k} = (p_i\cdot u_i)\circ v_i$, for
  some words~$u_i$ and~$v_i$, with~$u_iv_i=w$.
\item We say that~$w$ induces a \emph{graph-cycle} on~$P$ if
  $w=a_1a_2\dots a_n$ and we have
  \[ p_{(i+1)\bmod k} = (\dots((p_i \bullet_{i,1} a_1) \bullet_{i,2}
  a_2)\dots)\bullet_{i,n} a_n, \]
  for $0\leq i\leq k-1$, where~$\bullet_{i,j}\in \{\cdot,\circ\}$,
  for~$1\leq j\leq n$. The intuition behind the definition of a
  graph-cycle is that the word~$w$ specifies the sequence of
  transitions (regardless whether they are forward or backward
  transitions) that are taken during the course of the computation.
\end{itemize}

Notice that if a biautomaton has a permutation as defined before
Theorem~\ref{thm:star-free-permutation-free}, then it also has a
word-cycle, and also a graph-cycle.  Hence, if a language is accepted
by a biautomaton that has no word-cycle or by a biautomaton that has
no graph-cycle, then it is star-free.  However, the converse is not
true, as the following example shows.

\begin{example}
  Let $A=(Q,\Sigma,\cdot,\circ,q_0,F)$ be the \emph{minimal} \dbia\ for the
  language~$L=\{aab,bab\}^*$.  The biautomaton~$A$ is depicted in
  Figure~\ref{fig:permutation-free-with-word-graph-cycle}---solid
  arrows denote forward transitions by~$\cdot$, and dashed arrows
  denote backward transitions by~$\circ$.
  \begin{figure}
    \centering
    \includegraphics[scale=.9]{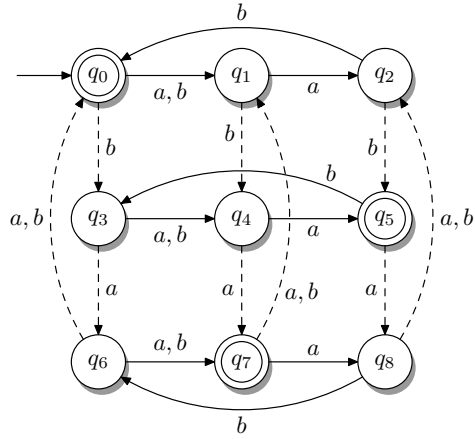}
    \caption{The permutation-free biautomaton~$A$ that has a
      word-cycle and a graph-cycle.}
    \label{fig:permutation-free-with-word-graph-cycle}
  \end{figure}
  By inspecting the \dfa~$A_\fwd$, consisting of the
  states~$q_0,q_1,q_2$, and the non-accepting sink state, which is not
  shown, one can see that~$A_\fwd$ is permutation-free.  Therefore the
  language~$L$ is star-free, and also the biautomaton~$A$ must be
  permutation-free.  However, the word~$ab$ induces a word-cycle on
  the states~$q_0$,~$q_1$, and~$q_6$ because $q_0\circ ab = q_6$,
  $(q_6\cdot a)\circ b = q_1$, and $q_1\cdot ab = q_0$.  Further, the
  word~$ab$ also induces a graph-cycle on the states~$q_0$,~$q_4$,
  and~$q_3$ because $(q_0\cdot a)\circ b = q_4$, $(q_4\cdot a)\cdot b
  = q_3$, and $(q_3\circ a)\circ b = q_0$.
\end{example}

\subsection{Strongly Permutation-Free Automata}
\label{sec:strongly-perm-free}

A permutation-free automaton does not contain any \emph{non-trivial}
permutation, but it may contain the identity permutation.  We may also
forbid the identity permutation, which leads to the following
definitions.
A \dfa~$A=(Q,\Sigma,\delta,q_0,F)$ is \emph{strongly permutation-free}
if there is no non-empty word~$w\in\Sigma^+$, such that the
mapping~$q\mapsto \delta(q,w)$ induces a permutation on some
set~$P\subseteq Q$, with~$\abs P\geq 2$.
Similarly, a biautomaton $A=(Q,\Sigma,\cdot,\circ,q_0,F)$ is
\emph{strongly permutation-free} if there are no
words~$u,v\in\Sigma^*$, with~$uv\in\Sigma^+$, such that the
mapping~$q\mapsto (q\cdot u)\circ v$ induces a permutation on some
set~$P\subseteq Q$, with~$\abs P\geq 2$.
In these definitions we require the words~$w$ and~$uv$ to be non-empty
because the empty-word always induces the identity permutation on all
sets of states.  Further we only consider subsets~$P\subseteq Q$
with~$\abs P\geq 2$ because every \dfa\ and every biautomaton must
contain a (maybe identity) permutation on a set~$P\subseteq Q$
with~$\abs P \geq 1$, since by the pigeon hole principle there is a
state which is repeatedly visited by only reading the letter~$a$ long
enough.

Before we study which languages are accepted by strongly
permutation-free automata, we give some further definitions which turn
out to be related to strongly permutation-freeness.
Let~$A=(Q,\Sigma,\delta,q_0,F)$ be a \dfa, and~$w\in \Sigma^+$ some
non-empty word.  A state~$q\in Q$ is called a \emph{$w$-attractor}
in~$A$, if for all states~$p\in Q$ there is an integer~$k>0$ such
that~$\delta(p,w^k)=q$.
For a biautomaton $A=(Q,\Sigma,\cdot,\circ,q_0,F)$ and
words~$u,v\in\Sigma^*$, with~$\abs{uv}\geq 1$, we denote
by~$\pi_{u,v}$ the mapping $p\mapsto (p\cdot u)\circ v$.  Now a
state~$q\in Q$ is a \emph{$(u,v)$-attractor} in~$A$ if for all~$p\in
Q$ there is an integer~$k>0$ such that $\pi_{u,v}^k(p) = q$.  Notice
that due to the $\diamond$-property of~$A$ the
condition~$\pi_{u,v}^k(p) = q$ can also be written as $(p\cdot
u^k)\circ v^k = q$.
Next we recall the definition of almost-equivalence.  Two
languages~$L_1$ and~$L_2$ are \emph{almost-equivalent} ($L_1\sim L_2$)
if their symmetric difference $L_1\symd L_2 = (L_1\setminus L_2) \cup
(L_2\setminus L_1)$ is finite.  This notion naturally transfers to
states as follows.  For a state~$q$ of some \dfa\ or biautomaton~$A$
we denote by~$L_A(q)$ the language accepted by the automaton~${}_qA$
which is obtained from~$A$ by making state~$q$ its initial state.  The
language~$L_A(q)$ is also called the \emph{right language} of~$q$. Now
two states~$p$ and~$q$ of a \dfa\ or biautomaton~$A$ are
\emph{almost-equivalent} ($p\sim q$) if their \emph{right
  languages}~$L_A(p)$ and~$L_A(q)$ are almost-equivalent.  We
write~$p\equiv q$, if~$p$ and~$q$ are \emph{equivalent}, i.e.,
if~$L_A(p)=L_A(q)$.
Our next theorem connects the notions of almost-equivalence,
$w$-attractors, and strongly permutation-freeness for \dfa s, and
shows that these conditions can be used to characterize the class of
definite languages---a language~$L$ over an alphabet~$\Sigma$ is
\emph{definite} if there are finite languages~$L_1$ and~$L_2$
over~$\Sigma$ such that $L = L_1\cup \Sigma^*L_2$.
Interestingly the relation between definite languages and
almost-equivalence was already studied in~\cite{PRS63}, long before
the notion of almost-equivalence became popular
in~\cite{BGS09}---in~\cite{PRS63} the used form of equivalence was
not called ``almost-equivalence,'' but simply ``equivalent.''
Moreover, the relation between definite languages and strongly
permutation-free automata was independently shown
in~\cite{BrLi12}---compare also with~\cite[Exercise~28 of Chapter~4
and Exercise~13 of Chapter~5]{McNaPa71}. 

\begin{theorem}\label{thm:dfa-characterization-almost-definite}
  Let~$A=(Q,\Sigma,\delta,q_0,F)$ be some \emph{minimal} \dfa, then the
  following statements are equivalent:
  \begin{enumerate}
  \item All states in~$Q$ are pairwise
    almost-equivalent.  \label{item:charact-almost}
  \item For all words~$w\in\Sigma^+$, there is a~$w$-attractor
    in~$A$.  \label{item:charact-attractor}
  \item $A$ is strongly
    permutation-free.  \label{item:charact-permutation}
  \item $L(A)$ is a definite language.\label{item:charact-definite}
  \end{enumerate}
\end{theorem}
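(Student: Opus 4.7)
The plan is to establish the cycle of implications $\ref{item:charact-almost}\Rightarrow \ref{item:charact-attractor}\Rightarrow \ref{item:charact-permutation}\Rightarrow \ref{item:charact-definite}\Rightarrow \ref{item:charact-almost}$. As a preliminary, I would record the following equivalent formulation of definiteness: $L$ is definite if and only if there exists an integer $N$ such that $uv\in L \Leftrightarrow v\in L$ for every $u\in\Sigma^*$ and every $v$ with $|v|\geq N$. This is routine, by splitting $L_1\cup\Sigma^*L_2$ at a length exceeding the longest word in $L_1\cup L_2$ and, conversely, by setting $L_1=L\cap\Sigma^{<N}$ and $L_2=L\cap\Sigma^N$.

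For $\ref{item:charact-almost}\Rightarrow \ref{item:charact-attractor}$, fix $w\in\Sigma^+$ and two states $p,q\in Q$. If $\delta(p,w^k)\neq\delta(q,w^k)$ then, by minimality of $A$, these two states have distinct right languages, yielding a word $x$ with $w^kx\in L_A(p)\symd L_A(q)$; since this symmetric difference is finite, $k$ is bounded above. Taking the maximum of these bounds over the finitely many pairs produces a power $K$ such that $\delta(\cdot,w^K)$ is constant on $Q$, and that common value is the required $w$-attractor. The implication $\ref{item:charact-attractor}\Rightarrow\ref{item:charact-permutation}$ is the contrapositive: if $w\in\Sigma^+$ induced a permutation on a set $P$ with $|P|\geq 2$, then $\delta(P,w^k)=P$ for every $k\geq 0$, so no single state could serve as a $w$-attractor.

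The main obstacle is $\ref{item:charact-permutation}\Rightarrow\ref{item:charact-definite}$, which I would derive from the following uniform synchronization claim: every word $w=a_1a_2\cdots a_\ell$ of length $\ell\geq 2^{|Q|}$ satisfies $|\delta(Q,w)|=1$. Applying the pigeonhole principle to the $\ell+1$ prefix images $\delta(Q,a_1\cdots a_i)$, two must coincide at some indices $i<j$; then $a_{i+1}\cdots a_j$ sends the nonempty finite set $\delta(Q,a_1\cdots a_i)$ surjectively, hence bijectively, onto itself, inducing a permutation there. Strongly permutation-freeness forces this set to be a singleton, and singleton images are preserved under further transitions, so $|\delta(Q,w)|=1$. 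Specializing with $v$ in place of $w$, we obtain $\delta(q_0,uv)=\delta(q_0,v)$ whenever $|v|\geq 2^{|Q|}$ and $u\in\Sigma^*$, so $L(A)$ is definite by the preliminary characterization.

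Finally, for $\ref{item:charact-definite}\Rightarrow \ref{item:charact-almost}$, use that every state of the minimal \dfa~$A$ is reachable, so any two states can be written as $p=\delta(q_0,u_p)$ and $q=\delta(q_0,u_q)$. If $v\in L_A(p)\symd L_A(q)$ then $u_pv\in L$ holds while $u_qv\in L$ fails (or vice versa), and the preliminary characterization forces $|v|<N$. Hence $L_A(p)\symd L_A(q)\subseteq\Sigma^{<N}$ is finite, which closes the cycle.
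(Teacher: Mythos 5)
A preliminary remark: the paper does not actually prove this theorem --- it only cites the literature (\cite{PRS63}, \cite{BrLi12}, and exercises in \cite{McNaPa71}) --- so there is no in-paper proof to compare against; the closest reference point is the proof of the biautomaton analogue, Theorem~\ref{thm:dbia-characterization-almost-co-finite}, which is organized differently (it proves statement~4 $\Rightarrow$ everything, $1\Leftrightarrow 4$, $2\Rightarrow 3$, and $3\Rightarrow 1$). Your single cycle \ref{item:charact-almost}$\Rightarrow$\ref{item:charact-attractor}$\Rightarrow$\ref{item:charact-permutation}$\Rightarrow$\ref{item:charact-definite}$\Rightarrow$\ref{item:charact-almost} is a reasonable way to fill this in, and most of it is sound: the reformulation of definiteness via a cut length $N$ is correct, your argument for \ref{item:charact-almost}$\Rightarrow$\ref{item:charact-attractor} even produces a \emph{uniform} exponent $K$ with $\delta(\cdot,w^K)$ constant, the pigeonhole argument on the subset images $\delta(Q,a_1\cdots a_i)$ for \ref{item:charact-permutation}$\Rightarrow$\ref{item:charact-definite} is a clean synchronization argument, and \ref{item:charact-definite}$\Rightarrow$\ref{item:charact-almost} is fine.

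There is, however, a genuine gap in \ref{item:charact-attractor}$\Rightarrow$\ref{item:charact-permutation}. From $\delta(P,w^k)=P$ for all $k$ you conclude that no state can serve as a $w$-attractor, but the definition of a $w$-attractor lets the exponent $k$ depend on the state $p$, and a state inside a non-trivial $w$-cycle can attract every state in exactly this non-uniform way. Concretely, in the minimal \dfa\ for $aa(aa)^*$ over $\{a\}$, with states $q_\lambda,q_a,q_{aa}$ and $\delta(q_\lambda,a)=q_a$, $\delta(q_a,a)=q_{aa}$, $\delta(q_{aa},a)=q_a$, the letter $a$ induces a non-trivial permutation on $\{q_a,q_{aa}\}$, yet $q_{aa}$ \emph{is} an $a$-attractor: take $k=1$ for $q_a$ and $k=2$ for $q_\lambda$ and $q_{aa}$. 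So the contrapositive as you state it fails for the word $w$ itself. The implication is nevertheless true, but you must exploit the universal quantification over $w$ in statement~\ref{item:charact-attractor}: given a permutation of $P$ with $\abs{P}\geq 2$ induced by $w$, pass to $w'=w^{d}$ with $d=\abs{P}!$, which induces the \emph{identity} on $P$; then any two distinct $p,p'\in P$ satisfy $\delta(p,(w')^k)=p\neq p'=\delta(p',(w')^k)$ for every $k\geq 1$, so no single state is reachable from both via powers of $w'$, and hence $w'$ has no attractor. (In the example, $w'=aa$ indeed has no attractor, consistent with $aa(aa)^*$ not being definite.) It is worth noting that the paper's proof of the corresponding implication in Theorem~\ref{thm:dbia-characterization-almost-co-finite} tacitly asserts a uniform exponent $m$ in the attractor condition and would need the same repair.
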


Now we turn to biautomata, where we will see that the equivalences
between the first three conditions of
Theorem~\ref{thm:dfa-characterization-almost-definite} also hold in
the setting of biautomata.  However, we will see that the language
class related to these conditions is different.
Before we come to this result we recall the following lemma
from~\cite{HoJa14}:

\begin{lemma}\label{lem:dbia-almost-equiv-right-invariant}
  Let~$A=(Q,\Sigma,\cdot,\circ,q_0,F)$
  and~$A'=(Q',\Sigma,\cdot',\circ',q_0',F')$ be two biautomata, and
  let~$p\in Q$ and~$q\in Q'$.  Then $p\sim q$ if and only if $(p\cdot
  u)\circ v\sim (q\cdot' u)\circ' v$, for all words~$u,v\in\Sigma^*$.
  Moreover, $p\sim q$ implies~$(p\cdot u)\circ v \equiv (q\cdot'
  u)\circ' v$, for all words~$u,v\in\Sigma^*$ with $\abs{uv}\geq
  k=\abs{Q\times Q'}$.
\end{lemma}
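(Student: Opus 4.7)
The plan is to recast the right language of the shifted state $(p\cdot u)\circ v$ as a kind of two-sided quotient of the right language of~$p$, which reduces both halves of the lemma to book-keeping on distinguishing sets. Using the $\diamond$-property together with the identity $(q\cdot u)\circ vw\in F\iff (q\cdot uv)\circ w\in F$ recalled in the preliminaries, a direct calculation shows that $((p\cdot u)\circ v)\cdot w\in F$ if and only if $p\cdot uwv\in F$, and analogously in~$A'$. Hence
\[L_A((p\cdot u)\circ v)\symd L_{A'}((q\cdot' u)\circ' v)=\{\,w\in\Sigma^*\mid uwv\in L_A(p)\symd L_{A'}(q)\,\}.\]

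This identity settles the qualitative part of the lemma immediately. The direction $(\Leftarrow)$ is the specialization $u=v=\lambda$, while for $(\Rightarrow)$ the finiteness of $L_A(p)\symd L_{A'}(q)$ forces the displayed set to be finite, because different choices of~$w$ yield different words~$uwv$. For the quantitative refinement the plan is to prove that whenever $p\sim q$, every word in $L_A(p)\symd L_{A'}(q)$ has length strictly less than $k=\abs{Q\times Q'}$. Granted this bound, for $\abs{uv}\geq k$ any word~$w'$ distinguishing the shifted pair would lift via the displayed identity to a distinguishing word $uw'v$ of length at least~$k$ for~$(p,q)$, contradicting the bound; so the shifted pair must be $\equiv$-equivalent, as claimed.

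The main obstacle, and the only real content, is the length bound. I would prove it by a pigeon-hole pumping argument inside the product of the two forward \dfa s: if some $w\in L_A(p)\symd L_{A'}(q)$ had length $\geq k$, then among the $\abs{w}+1$ pairs $(p\cdot x,\,q\cdot' x)$ for prefixes~$x$ of~$w$ two must coincide, giving a factorization $w=xyz$ with $y\neq\lambda$ such that $p\cdot xy^n=p\cdot x$ and $q\cdot' xy^n=q\cdot' x$ for all~$n\geq 0$. Iterating~$y$ then yields infinitely many words $xy^nz$ all lying in $L_A(p)\symd L_{A'}(q)$, contradicting $p\sim q$. The subtle point is to invoke the $\diamond$- and $F$-properties correctly so that the right language $L_A(p)=\{\,w\mid p\cdot w\in F\,\}$ is accessed purely through forward transitions, turning the pumping step into a routine \dfa\ argument on the product automaton.
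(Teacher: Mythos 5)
Your proof is correct. Note that this paper does not prove the lemma at all---it is explicitly recalled from the companion report~\cite{HoJa14}---so there is no in-paper argument to compare against, and your proposal has to stand on its own; it does. The crux, the identity $L_A\bigl((p\cdot u)\circ v\bigr)=\{\,w\in\Sigma^*\mid uwv\in L_A(p)\,\}$, follows exactly as you say from the extended $\diamond$-property together with the fact $(q\cdot u)\circ vw\in F\iff(q\cdot uv)\circ w\in F$ from the preliminaries; injectivity of $w\mapsto uwv$ then gives the qualitative equivalence (with $u=v=\lambda$ for the converse), and your pigeonhole bound that every word of $L_A(p)\symd L_{A'}(q)$ has length below $k$ delivers the quantitative refinement, since a distinguishing word $w'$ for the shifted pair would lift to the distinguishing word $uw'v$ of length at least $k$. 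One cosmetic remark: the pumping is best phrased over all pairs in $Q\times Q'$ under the two forward transition functions rather than over the ``product of the two forward \dfa s,'' since $p$ and $q$ need not lie in $Q_\fwd$ and $Q'_\fwd$; this is anyway what your constant $k=\abs{Q\times Q'}$ presupposes, so nothing breaks.
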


The two automata~$A$ and~$A'$ in
Lemma~\ref{lem:dbia-almost-equiv-right-invariant} need not be
different, so this lemma can also be used for states~$p$ and~$q$ in
one biautomaton~$A$.  Further, if this biautomaton~$A$ is minimal,
then it does not contain a pair of different, but equivalent states.
In this case the states~$p$ and~$q$ are almost-equivalent if and only
if for all long enough words~$uv\in\Sigma^*$ the two states~$(p\cdot
u)\circ v$ and~$(q\cdot u)\circ v$ are the same state.  We obtain the
following corollary.

\begin{corollary}\label{cor:p-q-almost-iff-equal-for-long-words}
  Let $A=(Q,\Sigma,\cdot,\circ,q_0,F)$ be a \emph{minimal} biautomaton,
  and~$k=\abs{Q\times Q}$.  Two states~$p,q\in Q$ are
  almost-equivalent if and only if for all words~$u,v\in\Sigma^*$,
  with $\abs{uv}\geq k$, it is $(p\cdot u)\circ v = (q\cdot u)\circ
  v$. \hfill \qed
\end{corollary}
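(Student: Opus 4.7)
The plan is to derive the corollary directly from Lemma~\ref{lem:dbia-almost-equiv-right-invariant} applied with $A'=A$, exploiting the fact that in a minimal biautomaton no two distinct states have the same right language, so the relation $\equiv$ collapses to equality of states.

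For the forward direction, I would assume $p\sim q$ and invoke the ``moreover'' clause of Lemma~\ref{lem:dbia-almost-equiv-right-invariant}, which gives $(p\cdot u)\circ v\equiv(q\cdot u)\circ v$ for every pair $u,v\in\Sigma^*$ with $\abs{uv}\geq k=\abs{Q\times Q}$. Minimality of $A$ then upgrades this equivalence to the equality $(p\cdot u)\circ v=(q\cdot u)\circ v$, as required.

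For the reverse direction, assume that $(p\cdot u)\circ v=(q\cdot u)\circ v$ holds whenever $\abs{uv}\geq k$. Rather than inverting the lemma, I would argue directly: specialising $v=\lambda$ yields $p\cdot w=q\cdot w$ for every word $w\in\Sigma^*$ with $\abs{w}\geq k$. Since $w\in L_A(p)$ iff $p\cdot w\in F$ (and analogously for $q$), this shows that $L_A(p)$ and $L_A(q)$ agree on all words of length at least $k$. Hence $L_A(p)\symd L_A(q)\subseteq\{w\in\Sigma^*\mid\abs{w}<k\}$, which is finite, so $p\sim q$.

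There is no real obstacle: the statement is essentially a translation of the preceding lemma into the language of state equality available under minimality. The only mild subtlety worth flagging is that the reverse direction does not require the full biconditional of Lemma~\ref{lem:dbia-almost-equiv-right-invariant}; specialising $v=\lambda$ in the assumed equality and reading off acceptance via $\cdot$ alone (which is legitimate since $L(A)=\{w\mid q_0\cdot w\in F\}$) suffices to bound the symmetric difference by a finite set.
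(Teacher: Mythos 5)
Your proposal is correct and matches the paper's (implicit) argument: the corollary is stated with no separate proof, being derived exactly as you do from Lemma~\ref{lem:dbia-almost-equiv-right-invariant} with $A'=A$, using minimality to upgrade $\equiv$ to equality of states for the forward direction. Your explicit treatment of the reverse direction (specialising $v=\lambda$ to bound $L_A(p)\symd L_A(q)$ by the finite set of words of length less than $k$) just spells out what the paper leaves to the reader.
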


Now we are ready for our result on strongly permutation-free
biautomata.

\begin{theorem}\label{thm:dbia-characterization-almost-co-finite}
  Let~$A=(Q,\Sigma,\cdot,\circ,q_0,F)$ be some \emph{minimal} biautomaton,
  then the following statements are equivalent:
  \begin{enumerate}
  \item All states in~$Q$ are pairwise
    almost-equivalent.  \label{item:dbia-charact-almost}
  \item There is a state~$s\in Q$ that is a $(u,v)$-attractor in~$A$,
    for all~$u,v\in\Sigma^*$ with $|uv|\geq
    1$. \label{item:dbia-charact-attractor}
  \item $A$ is strongly
    permutation-free.  \label{item:dbia-charact-permutation}
  \item $L(A)$ is a finite or co-finite
    language.\label{item:dbia-charact-definite}
  \end{enumerate}  
\end{theorem}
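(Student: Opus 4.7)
My plan is to organize the equivalence around condition~(4), by showing $(4)\Rightarrow(1),(2),(3)$ from the structure of the minimal biautomaton of a finite or co-finite language and then $(1),(2),(3)\Rightarrow(4)$ each via a short argument. In the forward direction, recall that in the minimal biautomaton every state has right language of the form $u^{-1}Lv^{-1}$. If $L$ is finite (resp.\ co-finite) then every such quotient is finite (resp.\ co-finite), so all right languages differ only by finitely many words, which gives~(1). Moreover, for $u,v$ with $|uv|$ exceeding the length bound of the finite portion of $L$ (resp.\ of $\Sigma^*\setminus L$), the quotient $u^{-1}Lv^{-1}$ equals~$\emptyset$ (resp.~$\Sigma^*$), so $A$ contains the corresponding unique sink state~$s$. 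Iterating any $\pi_{u,v}$ with $|uv|\geq 1$ lengthens the margins $u^k,v^k$ without bound, so $\pi_{u,v}^k(p)=s$ for all $p$ once $k$ is large; this both provides the universal attractor of~(2) and precludes any (trivial or non-trivial) permutation on a set of size $\geq 2$, yielding~(3).

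For $(1)\Rightarrow(4)$, I would invoke Corollary~\ref{cor:p-q-almost-iff-equal-for-long-words}: under~(1), for all $u,v$ with $|uv|\geq k=\abs{Q\times Q}$ the state $(p\cdot u)\circ v$ is independent of~$p$. Applying this to $p=q_0$ and to $p=q_0\cdot u'\circ v'$, together with the identities $((q\cdot u')\circ v')\cdot u=(q\cdot u'u)\circ v'$ and $(q\circ v')\circ v=q\circ(vv')$ that follow from the $\diamond$-property, gives $(q_0\cdot u)\circ v=(q_0\cdot u'u)\circ(vv')$; by minimality these two states have the same right language, so $u^{-1}Lv^{-1}=(u'u)^{-1}L(vv')^{-1}$ for all $u',v'\in\Sigma^*$. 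Unpacking, for every $w\in\Sigma^*$ and all $u',v'\in\Sigma^*$ we have $uwv\in L$ iff $u'uwvv'\in L$, so as $x=uwv$ ranges over all words of length $\geq k$ every sufficiently long~$x$ satisfies $\Sigma^*x\Sigma^*\subseteq L$ or $\Sigma^*x\Sigma^*\cap L=\emptyset$. If witnesses of both alternatives existed, their concatenation would be simultaneously in and not in~$L$; hence a single alternative holds globally, and $L$ is co-finite or finite accordingly.

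For $(2)\Rightarrow(3)$ the key is the composition identity $\pi_{u,v}^c=\pi_{u^c,v^c}$, which follows by induction from the $\diamond$-property. If $\pi_{u,v}$ induced a permutation of order~$c$ on some $P$ with $|P|\geq 2$, then on~$P$ the map $\pi_{u^c,v^c}$ would act as the identity, so every iterate of $\pi_{u^c,v^c}$ fixes each $p\in P$; but by~(2) some iterate sends every $p$ to~$s$, forcing $|P|=1$, a contradiction. For $(3)\Rightarrow(4)$, strongly permutation-freeness of $A$ restricts to strongly permutation-freeness of $A_\fwd$ (take $v=\lambda$) and of $A_\bwd$ (take $u=\lambda$), so Theorem~\ref{thm:dfa-characterization-almost-definite} yields that both $L$ and $L^R$ are definite; the same route handles $(2)\Rightarrow(4)$, using that the universal attractor~$s$ provides a $u$-attractor in $A_\fwd$ for every $u\in\Sigma^+$. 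A short lemma then finishes: writing $L=L_1\cup\Sigma^*L_2=L_1'\cup L_2'\Sigma^*$ with all four pieces finite, either $L_2=\emptyset$ (so $L=L_1$ is finite), or, fixing $w\in L_2$, every sufficiently long~$v$ satisfies $vw\in L\setminus L_1'\subseteq L_2'\Sigma^*$ and hence $v$ itself has a prefix in~$L_2'$, so $L\supseteq L_2'\Sigma^*\supseteq\Sigma^{\geq N}$ for some~$N$ and $L$ is co-finite. The main obstacle is the $(1)\Rightarrow(4)$ step, where one must push the pointwise state equality supplied by the Corollary all the way to the two-sided quotient identity $u^{-1}Lv^{-1}=(u'u)^{-1}L(vv')^{-1}$; once that is in hand the ideal-like dichotomy for long words is a routine combinatorial observation.
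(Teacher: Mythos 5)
Your proposal is correct, and most of it runs parallel to the paper's own proof: the direction $(4)\Rightarrow(1),(2),(3)$ via the sink state of the minimal biautomaton, the direction $(1)\Rightarrow(4)$ via Corollary~\ref{cor:p-q-almost-iff-equal-for-long-words}, and the step $(2)\Rightarrow(3)$ by iterating $\pi_{u,v}$ are essentially the authors' arguments (their $(1)\Rightarrow(4)$ is a little leaner---two applications of the corollary show that one word of length at least $k$ in $L$ forces every word of length at least $k$ into $L$---but the idea, including the use of the $\diamond$-property to move between the two ends, is the same). Where you genuinely diverge is in closing the cycle from strong permutation-freeness back to the language class. The paper proves $(3)\Rightarrow(1)$ directly inside the biautomaton: given $p\nsim q$ it produces, via the corollary, words $u,v$ with $\abs{uv}\geq\abs{Q\times Q}$ and $(p\cdot u)\circ v\neq(q\cdot u)\circ v$, applies the pigeonhole principle to the sequence of state pairs along the two computations to factor out a middle piece $(u_2,v_2)$ acting as the identity permutation on $\{p_1,q_1\}$, and then strong permutation-freeness forces $p_1=q_1$, a contradiction. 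You instead prove $(3)\Rightarrow(4)$ by restriction: strong permutation-freeness of $A$ passes to the minimal \dfa s~$A_\fwd$ (take $v=\lambda$) and~$A_\bwd$ (take $u=\lambda$), so Theorem~\ref{thm:dfa-characterization-almost-definite} makes both $L$ and $L^R$ definite, and your closing lemma (a language that is simultaneously definite and reverse definite is finite or co-finite) finishes; this is sound, since the paper records that $A_\fwd$ and $A_\bwd$ are minimal whenever $A$ is. The paper's route is self-contained and exhibits the combinatorial mechanism (even an identity permutation on a two-element set is forbidden); your route reuses the \dfa\ theorem and, as a by-product, isolates exactly why the biautomaton condition collapses ``definite'' to ``finite or co-finite''---namely the simultaneous control of both ends of the word---which makes the contrast between the \dfa\ and \dbia\ characterizations pleasantly explicit.
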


\begin{proof}
  Let $A=(Q,\Sigma,\cdot,\circ,q_0,F)$ be a minimal biautomaton.
  First assume~$L(A)$ is a finite language, and let~$\ell$ be the
  length of a longest word in~$L(A)$.  Then the right
  language~$L_A(q)$ of every state~$q\in Q$ is finite, so all states
  in~$Q$ are pairwise almost-equivalent.  Since~$L(A)$ is finite,
  and~$A$ is minimal, the biautomaton has a non-accepting sink
  state~$s$, with $s\cdot a = s\circ a = s$ for all~$a\in\Sigma$.
  Moreover, from any state~$q\in Q$ the automaton always reaches this
  sink state~$s$ after reading at most~$\ell$ symbols.  Therefore, the
  state~$s$ is a $(u,v)$-attractor in~$A$, for all
  words~$u,v\in\Sigma^*$ with $|uv|\geq 1$.  It also follows that the
  only permutation that is possible in~$A$ is the identity permutation
  on the singleton set~$\{s\}$, hence~$A$ is strongly
  permutation-free.  The case where~$L(A)$ is a co-finite language is
  similar.  The only differences are that the sink state~$s$ is an
  accepting state, and the integer~$\ell$ must be the length of the
  longest word that is \emph{not} in~$L(A)$.  This shows that
  statement~\ref{item:dbia-charact-definite} implies all other
  statements, and it remains to prove the other directions.

  Assume that all states in~$Q$ are pairwise almost-equivalent, and
  let~$k$ be the integer from
  Corollary~\ref{cor:p-q-almost-iff-equal-for-long-words}.  If the
  length of every word in~$L(A)$ is less than~$k$ then~$L(A)$ is a
  finite language, so assume that there is a word~$w\in L(A)$
  with~$\abs w \geq k$.  We show that in this case language~$L(A)$
  contains every word of length at least~$k$, and thus, is co-finite.
  Since~$\abs w \geq k$, we can write~$w$ as~$w=w_1w_2$,
  with~$\abs{w_2}=k$.  Because~$w\in L(A)$ we have $(q_0\cdot
  w_1)\circ w_2\in F$.  Now let~$u\in\Sigma^{\geq k}$, and consider
  the states~$p=(q_0\cdot w_1)$ and~$q=(q_0\cdot u)$.  Since all
  states are almost-equivalent, and the word~$w_2$ has length~$k$, we
  can use Corollary~\ref{cor:p-q-almost-iff-equal-for-long-words} to
  obtain $(p\cdot \lambda)\circ w_2 = (q\cdot \lambda)\circ w_2$.
  Hence the state $q\circ w_2 = (q_0\cdot u)\circ w_2$ is accepting.
  By the $\diamond$-property of~$A$ we have $(q_0\cdot u)\circ w_2 =
  (q_0\circ w_2)\cdot u$, and another application of
  Corollary~\ref{cor:p-q-almost-iff-equal-for-long-words} on the
  almost-equivalent states~$q_0$ and~$q_0\circ w_2$ we obtain
  $q_0\cdot u = (q_0\circ w_2)\cdot u$, because $\abs u \geq k$.  This
  shows that the word~$u$ is accepted by~$A$, hence~$L(A)$ is
  co-finite.  This shows that
  statements~\ref{item:dbia-charact-almost}
  and~\ref{item:dbia-charact-definite} are equivalent.
  
  Next assume there is a state~$s\in Q$ that is a $(u,v)$-attractor
  for all words~$u,v\in \Sigma^*$ with $|uv|\geq 1$.  Then~$A$ must be
  strongly permutation-free, which can be seen as follows.  Assume
  that there are words~$u,v\in\Sigma^*$ with $|uv|\geq 1$ such that
  the mapping $\pi\colon q\mapsto (q\cdot u)\circ v$ is a permutation
  on some set~$P\subseteq Q$, with~$\abs P \geq 2$.  Then it must be
  $\abs{\pi^i(P)} = \abs P \geq 2$, for all~$i\geq 0$. But since~$s$
  is a $(u,v)$-attractor, there is an integer~$m\geq 0$ such that
  $(q\cdot u^m)\circ v^m = s$, for all states~$q\in Q$.  Then
  $\abs{\pi^m(P)}=1$, which is a contradiction, therefore~$A$ is
  strongly permutation-free.

  Now it is sufficient to show that
  statement~\ref{item:dbia-charact-permutation} implies
  statement~\ref{item:dbia-charact-almost}.  Therefore let~$A$ be
  strongly permutation-free, and assume for the sake of contradiction,
  that there are two states~$p,q\in Q$ with~$p\nsim q$.
  Corollary~\ref{cor:p-q-almost-iff-equal-for-long-words} now implies
  that there are words~$u,v\in\Sigma^*$, with~$\abs{uv}\geq
  \abs{Q\times Q}$, such that $(p\cdot u)\circ v \neq (q\cdot u)\circ
  v$.
  Since the number of steps in the computations $(p\cdot u)\circ v$
  and $(q\cdot u)\circ v$ is at least~$\abs{Q\times Q}$, the words~$u$
  and~$v$ can be written as~$u=u_1u_2u_3$ and~$v=v_3v_2v_1$ such that
  \begin{align*}
    (p\cdot u_1)\circ v_1 &= p_1, & (p_1\cdot u_2)\circ v_2 &= p_1, &
    (p_1\cdot u_3)\circ v_3 &= (p\cdot u)\circ v,\\
    (q\cdot u_1)\circ v_1 &= q_1, & (q_1\cdot u_2)\circ v_2 &= q_1, &
    (q_1\cdot u_3)\circ v_3 &= (q\cdot u)\circ v.
  \end{align*}
  But then the mapping $\pi\colon r\mapsto (r\cdot u_2)\circ v_2$ is a
  permutation (the identity permutation) on the states~$\{p_1,q_1\}$.
  Since~$A$ is strongly permutation-free, it follows~$p_1=q_1$, and in
  turn $(p\cdot u)\circ v = (q\cdot u)\circ v$.  This contradicts the
  assumption~$p\nsim q$, and concludes our proof.
\end{proof}

\section{Ordered Automata}
\label{sec:ordered-automata}

We now study automata where one can find an order on the state set
that is compatible with the transitions of the automaton.  Ordered
\dfa s and their accepted languages were studied in~\cite{ShTh74}.  A
\dfa\ $A=(Q,\Sigma,\delta,q_0,F)$ is \emph{ordered} if there exists
some total order~$\leq$ on the state set~$Q$ such that~$p\leq q$
implies $\delta(p,a)\leq \delta(q,a)$, for all states~$p,q\in Q$ and
symbols~$a\in\Sigma$.  Similarly, a biautomaton
$A=(Q,\Sigma,\cdot,\circ,q_0,F)$ is \emph{ordered} if there is a total
order~$\leq$ on~$Q$ such that~$p\leq q$ implies $p\cdot a \leq q\cdot
a$ as well as $p\circ a \leq q\circ a$, for all states~$p,q\in Q$ and
symbols~$a\in\Sigma$.
A regular language is \emph{ordered} if it is accepted by some ordered
\dfa, and it is \emph{bi-ordered} if it is accepted by an ordered
biautomaton.  Moreover, a language is \emph{strictly ordered} if its
minimal \dfa\ is ordered, and it is \emph{strictly bi-ordered} if its
minimal biautomaton is ordered.

The next two results show that the class of bi-ordered languages is
located between the class of ordered languages and the class of finite
and co-finite languages.

\begin{theorem}\label{thm:bi-ordered-strictly-in-ordered}
  The class of bi-ordered languages is strictly contained in the class
  of ordered languages.  
\end{theorem}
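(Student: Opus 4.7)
The plan is to show the containment by restricting the witnessing order to an embedded DFA, and to establish strictness by exhibiting an ordered language whose reversal is not ordered.

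For the containment, let $A=(Q,\Sigma,\cdot,\circ,q_0,F)$ be an ordered biautomaton with witnessing total order $\leq$ on $Q$. Restricting $\leq$ to $Q_\fwd$, the defining condition $p\leq q \Rightarrow p\cdot a \leq q\cdot a$ (for every $a\in\Sigma$) immediately implies that $A_\fwd$ is an ordered DFA; since $L(A_\fwd)=L(A)$, every bi-ordered language is ordered.

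For strictness, the key observation is that the same argument applied to $A_\bwd$ and its backward transitions yields an ordered DFA for $L(A)^R=L(A_\bwd)$. Hence every bi-ordered language has an ordered reversal, and it suffices to produce an ordered language whose reversal is not ordered. I would take $L=(a+b)^*ab$ over $\Sigma=\{a,b\}$. Its three-state minimal DFA has states $q_0$ (no relevant suffix), $q_1$ (suffix $a$), and $q_2$ (suffix $ab$, accepting), and the total order $q_0\leq q_2\leq q_1$ is directly verified to be preserved by both the $a$- and $b$-transitions, so $L$ is ordered. The reversal $L^R=ba(a+b)^*$ has a four-state minimal DFA with an initial state $q_0$, an intermediate state $q_1$ reached after reading $b$, an accepting sink, and a dead sink; a short case analysis on the two possible orderings of $q_0$ and $q_1$ shows that in each case the $a$- and $b$-transitions impose incompatible constraints on the relative order of the two sinks, so $L^R$ is not ordered. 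Hence $L$ is ordered but not bi-ordered.

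The only real obstacle is the finite case distinction showing $L^R$ is not ordered. It is routine but slightly fiddly; the cleanest framing is to note that comparing $\delta(q_0,a),\delta(q_1,a)$ with $\delta(q_0,b),\delta(q_1,b)$ forces the accepting sink and the dead sink to sit on opposite sides of the two transient states depending on which letter is considered, so no single linear order on the four states can be compatible with the transition function.
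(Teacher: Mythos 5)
Your containment argument matches the paper's, and your reduction of strictness to non-closure of the ordered languages under reversal (via the ordered \dfa\ $A_\bwd$ for $L(A)^R$) is a genuinely different and appealing idea. But the strictness part has a real gap. By the paper's definitions, a language is \emph{ordered} if it is accepted by \emph{some} ordered \dfa, and \emph{strictly ordered} if its \emph{minimal} \dfa\ is ordered; these classes are provably different (the finite language $\{ab\}$ is bi-ordered, hence ordered, yet its minimal \dfa\ is not ordered). Your case analysis is carried out only on the four-state minimal \dfa\ of $L^R=ba\Sigma^*$, so it establishes that $L^R$ is not \emph{strictly} ordered --- it does not rule out a larger, non-minimal ordered \dfa\ for $L^R$, which is exactly what you would need. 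Since orderability is not inherited by the minimal quotient, ``minimal \dfa\ not ordered'' does not imply ``not ordered.''

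The claim you need is in fact true: $ba\Sigma^*$ is accepted by no ordered \dfa. But proving it takes more work than the two-case analysis you sketch. In an arbitrary reachable \dfa\ for $ba\Sigma^*$ the states split into the singletons $\{q_0\}$ and $\{p\}$ with $p=\delta(q_0,b)$, a set $T$ of accept-all states and a set $D$ of dead states, and one has to show that any total order compatible with the transitions forces an infinite strictly monotone chain alternating between $T$- and $D$-states (e.g., a pair $d<t$ with $\delta(t,w)\leq d$ regenerates itself under $\delta(\cdot,w)$ at strictly smaller positions). This is doable but is the actual content of the strictness proof, and it is missing. The paper sidesteps the issue by arguing directly about an \emph{arbitrary} biautomaton $B$ with $L(B)=\Sigma^*ab\Sigma^*$, enumerating the possible orderings of the three states $(q_0\cdot\lambda)\circ\lambda$, $(q_0\cdot a)\circ\lambda$, $(q_0\cdot\lambda)\circ b$ and deriving a contradiction in each case; if you want to keep your reversal-based route, you must likewise quantify over all \dfa s for $L^R$, not just the minimal one.
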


\begin{proof}
  If~$L$ is a bi-ordered language, then it is accepted by some ordered
  \dbia~$A$.  Then of course the automaton~$A_\fwd$ is an ordered
  \dfa.  Therefore any bi-ordered language is an ordered language.
  The strictness of this inclusion is witnessed by the
  language~$\Sigma^* ab \Sigma^*$ over the alphabet~$\Sigma=\{a,b\}$,
  which is accepted by the ordered \dfa~$A$ from
  Figure~\ref{fig:ordered-dfa-SigStar-ab-SigStar}.
  \begin{figure}
    \centering
    \includegraphics[scale=.9]{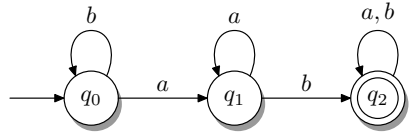}
    \caption{The minimal \dfa~$A$ for the language $\Sigma^* ab
      \Sigma^*$ over the alphabet~$\Sigma=\{a,b\}$.  The states
      of~$A_1$ can be ordered by $q_0 \leq q_1 \leq q_2$.}
    \label{fig:ordered-dfa-SigStar-ab-SigStar}
  \end{figure}
  
  Next let us argue, why no biautomaton for the language $\Sigma^* ab
  \Sigma^*$ can be ordered.  Therefore consider some biautomaton
  $B=(Q,\Sigma,\cdot,\circ,q_0,F)$ with $L(B)=\Sigma^* ab\Sigma^*$.
  In the following we use the notation~$[u.v]$, for~$u,v\in\Sigma^*$,
  to describe the state~$(q_0\cdot u)\circ v$ of~$B$.  Of course,
  different word pairs~$[u.v]$ and~$[u'.v']$ may describe the same
  state.  First note that the three states~$[\lambda.\lambda]$,
  $[a.\lambda]$, and $[\lambda.b]$ must be pairwise distinct, because
  every one of these states leads to an accepting state on a different
  input string.  Assume there is some order~$\leq$ on the state
  set~$Q$ that is compatible with the transition functions of~$B$.
  There are six different possibilities to order the three above
  mentioned states:
  \begin{align*}
    [\lambda.\lambda] &\leq [a.\lambda] \leq [\lambda.b], 
    & [a.\lambda] &\leq [\lambda.\lambda] \leq [\lambda.b],
    & [\lambda.b] &\leq [a.\lambda] \leq [\lambda.\lambda], \\
    [\lambda.\lambda] &\leq [\lambda.b] \leq [a.\lambda],
    & [a.\lambda] &\leq [\lambda.b] \leq [\lambda.\lambda],
    & [\lambda.b] &\leq [\lambda.\lambda] \leq [a.\lambda].
  \end{align*}
  If $[\lambda.\lambda] \leq [a.\lambda] \leq [\lambda.b]$ then it
  must be $[\lambda.b^i] \leq [a.b^i] \leq [\lambda.b^{i+1}]$, for
  all~$i\geq 0$.  Since the number of states in~$Q$ is finite, there
  must be integers~$j>k\geq 0$ for which $[\lambda.b^k] =
  [\lambda.b^j]$.  It then follows that $[\lambda.b^k] = [a.b^k] =
  [\lambda.b^{k+1}]$.  This is a contradiction because~$[a.b^k]$
  describes an accepting state, while~$[\lambda.b^k]$
  and~$[\lambda.b^{k+1}]$ describe non-accepting states.

  If $[\lambda.\lambda] \leq [\lambda.b] \leq [a.\lambda]$ then we
  obtain $[a^i.\lambda] \leq [a^i.b] \leq [a^{i+1}.\lambda]$ for
  all~$i\geq 0$.  Similar to the case above we get a contradiction:
  because~$Q$ is finite there is an integer~$k\geq 0$ with
  $[a^k.\lambda] = [a^k.b] = [a^{k+1}.\lambda]$, but the
  state~$[a^k.b]$ is accepting while the other two states are
  non-accepting.

  Next consider the case $[a.\lambda] \leq [\lambda.\lambda] \leq
  [\lambda.b]$.  By reading symbol~$a$ with a forward transition we
  obtain $[a.\lambda]\leq [a.b]$ from the second inequality, and by
  reading~$b$ with a backward transition, the first inequality implies
  $[a.b]\leq [\lambda.b]$.  Note that both times~$[a.b]$ describes the
  same state because~$B$ has the $\diamond$-property.  Further, this
  state must be different from the three states~$[a.\lambda]$,
  $[\lambda.\lambda]$, and~$[\lambda.b]$ because it is an accepting
  state, and the others are not.  Now there are two possibilities for
  the placement of state~$[a.b]$ in the order:
  \[ [a.\lambda] \leq [a.b]\leq [\lambda.\lambda] \leq [\lambda.b]
  \quad{\text{or}}\quad [a.\lambda] \leq [\lambda.\lambda] \leq [a.b]
  \leq [\lambda.b]. \]
  The first case implies $[a^{i+1}.\lambda]\leq [a^{i+1}.b] \leq
  [a^i.\lambda]$, for all~$i\geq 0$, which leads to the contradictory
  equation $[a^{k+1}.\lambda] = [a^{k+1}.b] = [a^k.\lambda]$, for
  some~$k\geq 0$.  The second case implies $[\lambda.b^i] \leq
  [a.b^{i+1}] \leq [\lambda.b^{i+1}]$, for all~$i\geq 0$, and to the
  contradiction $[\lambda.b^k] = [a.b^{k+1}] = [\lambda.b^{k+1}]$, for
  some~$k\geq 0$.

  With similar argumentation, the remaining three cases $[a.\lambda]
  \leq [\lambda.b] \leq [\lambda.\lambda]$, $[\lambda.b] \leq
  [a.\lambda] \leq [\lambda.\lambda]$, and $[\lambda.b] \leq
  [\lambda.\lambda] \leq [a.\lambda]$ lead to contradictions---we omit
  the details.  This shows that there is no order of the state set~$Q$
  that is compatible with the transition functions of~$B$.  Therefore,
  the ordered language~$\Sigma^* ab\Sigma^*$ is not a bi-ordered
  language.
\end{proof}

In the proof of the following result we use the lexicographic
order~$<_{\text{lex}}$ of words, which is defined as follows.
Let~$\Sigma$ be an alphabet of size~$k$ and fix some
order~$a_1,a_2,\dots,a_k$ of the symbols from~$\Sigma$.  For two
words~$w_1,w_2\in\Sigma^*$ let~$w_1 <_{\text{lex}} w_2$ if and only if
either~$w_1$ is a prefix of~$w_2$, or $w_1= u a_i w_1'$ and $w_2= u
a_j w_2'$, for some words~$u,w_1',w_2'\in\Sigma^*$ and
symbols~$a_i,a_j\in\Sigma$, with~$i<j$.

\begin{theorem}\label{thm:finite-strictly-in-bi-ordered}
  The class of finite and co-finite languages is strictly contained in
  the class of bi-ordered languages.
\end{theorem}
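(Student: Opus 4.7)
The plan is to establish the two inclusions separately. For the containment of the class of finite and co-finite languages in the class of bi-ordered languages, I would construct, for each finite $L\subseteq\Sigma^*$ with longest word of length $n$, the biautomaton whose non-sink states are the pairs $(u,v)$ with $u,v\in\Sigma^*$ and $\abs{u}+\abs{v}\le n$, together with one sink state $s$. The initial state is $(\lambda,\lambda)$, the accepting set is $\{(u,v) : uv\in L\}$, and the transitions are $(u,v)\cdot a=(ua,v)$ and $(u,v)\circ a=(u,av)$ whenever $\abs{u}+\abs{v}<n$ (otherwise landing in $s$), with $s$ looping on every letter. A routine check confirms the $\diamond$- and $F$-properties and that the accepted language is $L$. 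For co-finite $L$, apply the same construction to the finite complement $L^c$ and complement the accepting set (so that $s$ becomes accepting); the state set and transitions are unchanged, so whatever ordering works in the finite case also witnesses the co-finite case.

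To order the state set, I would place $s$ at the top and, for non-sink states, compare $(u,v)$ and $(u',v')$ lexicographically in the triple $(\abs{u}+\abs{v},\,u,\,v)$, where the second and third components are compared in shortlex order (by length first, then by standard lex). This is a total order. Both $\cdot$ and $\circ$ preserve it, because a transition either sends two states of maximal sum jointly to $s$, or increases the sum by exactly one uniformly on both sides; in the latter case, shortlex stability under appending a letter on the right (for $\cdot$) and under prepending a letter on the left (for $\circ$) carries the tie-breaking comparison through.

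For strictness I would exhibit $L=a^*$ over $\Sigma=\{a,b\}$. Its minimal biautomaton has just two states, an accepting initial state $q_0$ and a non-accepting sink $s$, with $q_0\cdot a=q_0\circ a=q_0$, $q_0\cdot b=q_0\circ b=s$, and $s$ looping on both letters; the linear order $q_0<s$ is trivially compatible with all transitions, so $a^*$ is bi-ordered. Yet $a^*$ contains all $a^i$ and its complement contains every word with an occurrence of $b$, so $L$ is neither finite nor co-finite. The principal technical point is the verification of order-preservation in the finite case, and the essential observation is that the \emph{shortlex} tiebreaker is necessary: plain lexicographic order would fail under appending, since for instance $b<_{\text{lex}}ba$ but $bb>_{\text{lex}}bab$, whereas the length-first shortlex order is stable under both appending and prepending a letter.
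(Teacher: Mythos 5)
Your proposal is correct and follows essentially the same route as the paper: the same pair-of-words construction with a non-accepting sink placed on top, and an order which --- once your shortlex tiebreakers are unfolded under the constraint that the total lengths already agree --- coincides with the paper's comparison by $\abs{uv}$, then $\abs{u}$, then lexicographic comparison of $u$ and of $v$, with the same complementation step for the co-finite case. The only difference is the strictness witness (you use $a^*$ over $\{a,b\}$, the paper uses $a^*+b$), and both are valid.
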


\begin{proof}
  Let~$L$ be some finite language over the alphabet~$\Sigma$ and
  let~$\ell$ be the length of the longest word in~$L$.  We construct
  an ordered biautomaton for~$L$ as follows.  Let
  $A=(Q,\Sigma,\cdot,\circ,q_0,F)$ be the biautomaton with state set
  $Q=\{\,(u,v)\mid u,v\in\Sigma^*, \abs{uv}\leq \ell\,\} \cup\{ s \}$,
  initial state~$q_0 = (\lambda,\lambda)$, set of accepting states
  $F=\{\,(u,v)\in Q\mid uv\in L\,\}$,
  and where the transition functions~$\cdot$ and~$\circ$ are defined
  as follows: for all symbols~$a\in\Sigma$ let $s\cdot a = s\circ a =
  s$, and for all states~$(u,v)\in Q$ let
  \begin{align*}
    (u,v)\cdot a &=
    \begin{cases}
      (ua,v) &\text{if $\abs{uav}\leq\ell$,}\\
      s &\text{otherwise,}
    \end{cases}
    &    
    (u,v)\circ a &=
    \begin{cases}
      (u,av) &\text{if $\abs{uav}\leq\ell$,}\\
      s &\text{otherwise.}
    \end{cases}
  \end{align*}
  One readily sees that~$A$ has both the $\diamond$-property, and the
  $F$-property.  Now let us define the order~$\leq$ on~$Q$ as follows.
  First of all let~$(u,v)\leq s$, for all~$(u,v)\in Q$, so the
  non-accepting sink state is the largest element of~$Q$.  Next, for
  two different states~$(u_1,v_1)$ and~$(u_2,v_2)$ let $(u_1,v_1)\leq
  (u_2,v_2)$ if and only if
  \begin{itemize}
  \item $\abs{u_1v_1} < \abs{u_2v_2}$, or
  \item $\abs{u_1v_1} = \abs{u_2v_2}$, and $\abs{u_1} < \abs{u_2}$, or
  \item $\abs{u_1v_1} = \abs{u_2v_2}$, $\abs{u_1} = \abs{u_2}$, and
    $u_1 <_{\text{lex}} u_2$, or
  \item $\abs{u_1v_1} = \abs{u_2v_2}$, $u_1 = u_2$, and $v_1
    <_{\text{lex}} v_2$.
  \end{itemize}
  Notice that if none of the four cases above holds, then
  $(u_1,v_1)=(u_2,v_2)$.  It remains to show that the transitions
  of~$A$ respect the order~$\leq$.  Since state~$s$ goes to itself on
  every symbol, because it is the largest element, we have $(u,v)\cdot
  a \leq s\cdot a$, and $(u,v)\circ a \leq s\circ a$.  Next
  let~$(u_1,v_1)$ and~$(u_2,v_2)$ be two different states of~$A$ with
  $(u_1,v_1)\leq (u_2,v_2)$.  Then it must be~$\abs{u_1v_1}\leq
  \abs{u_2v_2}\leq \ell$.  If~$\abs{u_2v_2} = \ell$, then~$(u_2,v_2)$
  goes to the sink state~$s$ on both the forward, and the backward
  $a$-transition.  Since~$s$ is the largest element we obtain
  $(u_1,v_1)\cdot a \leq (u_2,v_2)\cdot a$, and $(u_1,v_1)\circ a \leq
  (u_2,v_2)\circ a$.  Therefore, in the following argumentation we
  assume that $\abs{u_2,v_2}<\ell$, so that we have $(u_1,v_1)\cdot a
  = (u_1a,v_1)$ and $(u_2,v_2)\cdot a = (u_2a,v_2)$, as well as
  $(u_1,v_1)\circ a = (u_1,av_1)$ and $(u_2,v_2)\circ a = (u_2,av_2)$.
  Now we have to show $(u_1a,v_1)\leq (u_2a,v_2)$ and $(u_1,av_1)\leq
  (u_2,av_2)$, for which we distinguish four cases.
  \begin{itemize}
  \item If $\abs{u_1v_1} < \abs{u_2v_2}$ then clearly $\abs{u_1av_1} <
    \abs{u_2av_2}$, from which we conclude $(u_1a,v_1) \leq
    (u_2a,v_2)$, and $(u_1,av_1) \leq (u_2,av_2)$.
  \item If $\abs{u_1v_1} = \abs{u_2v_2}$, and $\abs{u_1} < \abs{u_2}$,
    then also $\abs{u_1a} < \abs{u_2a}$.  Again, we can conclude
    $(u_1a,v_1)\leq (u_2a,v_2)$, and $(u_1,av_1)\leq (u_2,av_2)$.
  \item Next assume $\abs{u_1v_1} = \abs{u_2v_2}$, $\abs{u_1} =
    \abs{u_2}$, and $u_1 <_{\text{lex}} u_2$.  Since~$u_1$ and~$u_2$
    are of same length, the fact~$u_1 <_{\text{lex}} u_2$ implies
    $u_1a <_{\text{lex}} u_2a$.  Thus, we get $(u_1a,v_1)\leq
    (u_2a,v_2)$, and $(u_1,av_1)\leq (u_2,av_2)$.
  \item Finally let $\abs{u_1v_1} = \abs{u_2v_2}$, $\abs{u_1} =
    \abs{u_2}$, $u_1 = u_2$, and $v_1 <_{\text{lex}} v_2$.  From $v_1
    <_{\text{lex}} v_2$ follows $av_1 <_{\text{lex}} av_2$, so we
    obtain $(u_1a,v_1)\leq (u_2a,v_2)$, and $(u_1,av_1)\leq
    (u_2,av_2)$.
  \end{itemize}
  This shows that the biautomaton~$A$ is an ordered biautomaton.

  In case of a co-finite language~$L\subseteq\Sigma^*$ we first take
  its complement $\Sigma^*\setminus L$, which is finite, and apply the
  above given construction. Then we obtain an ordered
  biautomata~$A$. Finally, exchanging accepting and non-accepting
  states---this is the ordinary complementation construction known for
  \dfa s applied to \dbia s---results in an ordered biautomata for the
  language~$L$.

  Finally, strictness of the inclusion is witnessed by the infinite
  and not co-finite language~$a^*+b$, which is accepted by the
  bi-ordered biautomaton from
  Figure~\ref{fig:ordered-biaut-for-a-star-plus-b}.
  \begin{figure}[b]
    \centering
    \includegraphics[scale=.9]{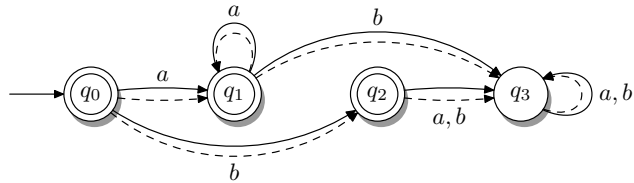}
    \caption{A bi-ordered biautomaton with order~$q_0\leq q_1\leq q_2\leq q_3$
      for the language~$a^*+b$.}
    \label{fig:ordered-biaut-for-a-star-plus-b}
  \end{figure}
\end{proof}

Notice that the language~$\Sigma^* ab \Sigma^*$ from the proof of
Theorem~\ref{thm:bi-ordered-strictly-in-ordered} is even a strictly
ordered language, since its minimal \dfa~$A$ from
Figure~\ref{fig:ordered-dfa-SigStar-ab-SigStar} is ordered.  As we
have seen, this language is not a bi-ordered language, therefore the
class of bi-ordered languages does not even contain all strictly
ordered languages.  On the other hand, if a language is strictly
bi-ordered, i.e., if its minimal biautomaton~$B$ is ordered, then also
the minimal \dfa~$B_\fwd$ is ordered.  Therefore, the class of
strictly bi-ordered languages is contained in the classes of strictly
ordered languages.  We summarize our findings in the following
corollary.

\begin{corollary}\label{cor:strictly-bi-ordered-strictly-in-strictly-ordered}
  The class of strictly bi-ordered languages is proper subset of 
  the class of strictly ordered languages.  \hfill \qed
\end{corollary}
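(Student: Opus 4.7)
The plan is to separate the statement into the inclusion and the strictness witness, since both halves are essentially assembled from results already established in the paper.

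For the inclusion, I would start from a strictly bi-ordered language $L$, so that its minimal biautomaton $B=(Q,\Sigma,\cdot,\circ,q_0,F)$ is ordered with some total order $\leq$ on $Q$. Invoking the result from \cite{HoJa14} recalled in the preliminaries, the \dfa\ $B_\fwd$ is the minimal \dfa\ for $L$. It remains to observe that the restriction of $\leq$ to $Q_\fwd\subseteq Q$ is a total order, and that for any $p,q\in Q_\fwd$ with $p\leq q$ and any $a\in\Sigma$ we have $\delta_\fwd(p,a)=p\cdot a\leq q\cdot a=\delta_\fwd(q,a)$ directly from the order condition on $B$. Thus $B_\fwd$ is ordered, showing that $L$ is strictly ordered. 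This half is routine once one invokes the minimality transfer result.

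For strictness, I would reuse the language $L=\Sigma^* ab\Sigma^*$ over $\Sigma=\{a,b\}$ from the proof of Theorem~\ref{thm:bi-ordered-strictly-in-ordered}. Its minimal \dfa, depicted in Figure~\ref{fig:ordered-dfa-SigStar-ab-SigStar}, carries the total order $q_0\leq q_1\leq q_2$ that is compatible with the transition function, so $L$ is strictly ordered. On the other hand, Theorem~\ref{thm:bi-ordered-strictly-in-ordered} establishes the stronger fact that \emph{no} biautomaton for $L$ can be ordered, so in particular the minimal biautomaton of $L$ fails to be ordered, and $L$ is not strictly bi-ordered. This gives the separating example.

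The whole argument is thus a one-shot assembly: the inclusion follows from the minimality-preservation lemma for $B_\fwd$ together with the trivial restriction of the order, and the strictness is witnessed by $\Sigma^*ab\Sigma^*$ using exactly the impossibility already proved in Theorem~\ref{thm:bi-ordered-strictly-in-ordered}. I do not expect a genuine obstacle; the only point requiring care is to verify that restricting the biautomaton's order to $Q_\fwd$ really yields an order compatible with $\delta_\fwd$, which is immediate from $\delta_\fwd(q,a)=q\cdot a$ and the ordering hypothesis on the forward transition of $B$.
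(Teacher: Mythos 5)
Your proposal is correct and follows essentially the same route as the paper: the inclusion is obtained by observing that the order on the minimal biautomaton $B$ restricts to a compatible order on the minimal \dfa~$B_\fwd$ (using the minimality transfer from~\cite{HoJa14}), and strictness is witnessed by $\Sigma^*ab\Sigma^*$, which Theorem~\ref{thm:bi-ordered-strictly-in-ordered} shows is not even bi-ordered. The only difference is that you spell out the verification that the restricted order is compatible with $\delta_\fwd$, which the paper leaves implicit.
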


Concerning the relation between bi-ordered languages and
strictly ordered languages, we can see that these are incomparable to
each other.  We have seen that the strictly ordered language~$\Sigma^*
ab \Sigma^*$ is not bi-ordered.  On the other hand, we know that every
finite language is bi-ordered.  But one can see that the minimal \dfa\
for the finite language~$\{ab\}$ is not ordered---the reader is
invited convince himself of this fact.  A proof of a more general
result, saying that a single word language is strictly ordered if and
only if the word is of the form~$a^i$ for some alphabet symbol~$a$ and
integer~$i\geq 0$, can be found in~\cite{ShTh74}.

\begin{corollary}
  The classes of bi-ordered languages and of strictly ordered
  languages are incomparable to each other.  \hfill \qed
\end{corollary}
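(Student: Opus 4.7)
The plan is to establish incomparability by exhibiting a witness language for each direction, since both candidates are already implicit in the preceding discussion. For a strictly ordered language that is not bi-ordered, I would reuse $\Sigma^* ab \Sigma^*$ over $\Sigma=\{a,b\}$. The minimal \dfa\ from Figure~\ref{fig:ordered-dfa-SigStar-ab-SigStar} is ordered under $q_0\leq q_1\leq q_2$, so the language is strictly ordered; on the other hand, the case analysis in the proof of Theorem~\ref{thm:bi-ordered-strictly-in-ordered} already shows that no biautomaton for $\Sigma^* ab \Sigma^*$ admits an order compatible with both transition functions, so the language is not even bi-ordered (and in particular not strictly bi-ordered).

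For the converse direction, I would use the singleton language $\{ab\}$. By Theorem~\ref{thm:finite-strictly-in-bi-ordered} every finite language is bi-ordered, so $\{ab\}$ belongs to the class of bi-ordered languages. To see that it is \emph{not} strictly ordered, I would either argue directly on its minimal \dfa\ (which has an initial state, a state reached by~$a$, the unique accepting state reached by~$ab$, and a non-accepting sink), checking that none of the orderings of these four states is compatible with both the $a$- and the $b$-transition (the state reached by $a$ must lie strictly between the initial state and the sink for the $b$-transition to be monotone, but then the $a$-transition violates monotonicity); or, more succinctly, invoke the characterization from~\cite{ShTh74} stating that a single-word language $\{w\}$ is strictly ordered only when $w$ is a power $a^i$ of one alphabet symbol, which $ab$ is not.

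The main obstacle is not a new combinatorial argument, since the heavy lifting was already carried out in Theorems~\ref{thm:bi-ordered-strictly-in-ordered} and~\ref{thm:finite-strictly-in-bi-ordered} (and in~\cite{ShTh74}); the task is to package these into a single statement by pointing to the two witnesses in both directions.
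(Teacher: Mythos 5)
Your proposal is correct and follows essentially the same route as the paper: the same two witnesses ($\Sigma^* ab \Sigma^*$, shown not bi-ordered in Theorem~\ref{thm:bi-ordered-strictly-in-ordered}, and the finite language $\{ab\}$, bi-ordered by Theorem~\ref{thm:finite-strictly-in-bi-ordered} but not strictly ordered), with the same appeal to~\cite{ShTh74} for the latter fact. The only quibble is that your parenthetical direct argument for $\{ab\}$ is imprecise --- monotonicity of the $b$-transition in the minimal \dfa\ actually forces the state reached by $a$ to be an extremum of the order (a minimum if the accepting state precedes the sink, a maximum otherwise), not to lie strictly between the initial state and the sink --- but your fallback to the characterization in~\cite{ShTh74} is exactly what the paper does, so the proof stands.
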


Moreover, with a similar argumentation as above we obtain:

\begin{corollary}
  The class strictly bi-ordered languages is a proper subset of the
  class of bi-ordered languages.  \hfill \qed
\end{corollary}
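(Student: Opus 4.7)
The proof plan is short because almost all the work has already been done in the preceding results; only the two directions of containment and strictness need to be assembled.

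For the inclusion, I would simply observe that if the minimal biautomaton of a language $L$ is ordered, then $L$ is by definition accepted by an ordered biautomaton (namely its own minimal one). Hence every strictly bi-ordered language is bi-ordered, and the stated inclusion holds.

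For strictness, the plan is to recycle the witness already introduced in the discussion preceding the corollary, namely the finite language $L=\{ab\}$ over $\Sigma=\{a,b\}$. By Theorem~\ref{thm:finite-strictly-in-bi-ordered}, every finite language is bi-ordered, so $L$ is bi-ordered. Suppose for contradiction that $L$ is strictly bi-ordered; then its minimal biautomaton $B$ is ordered. Recall from the preliminaries that for a minimal biautomaton $B$ the associated \dfa\ $B_\fwd$ is the minimal \dfa\ for $L(B)$. Since any order on the state set of $B$ witnessing that $B$ is ordered restricts to an order on $Q_\fwd$ compatible with $\delta_\fwd$, it follows that $B_\fwd$ is an ordered \dfa. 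But the paper has already noted (just before the corollary, citing~\cite{ShTh74}) that the minimal \dfa\ for $\{ab\}$ is not ordered, because a single-word language is strictly ordered if and only if the word is of the form $a^i$. This contradiction shows that $\{ab\}$ is bi-ordered but not strictly bi-ordered, so the inclusion is strict.

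There is no real obstacle in this argument; the only delicate point to spell out is the passage from ``$B$ ordered'' to ``$B_\fwd$ ordered,'' which rests on the fact that $B_\fwd$ shares $B$'s forward transitions on a subset of its state set, so any total order on $Q$ witnessing the ordered property for $B$ restricts to a total order on $Q_\fwd$ witnessing the ordered property for $B_\fwd$.
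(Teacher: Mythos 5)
Your proposal is correct and follows essentially the same route the paper intends: the inclusion is immediate from the definitions, and strictness is witnessed by the finite language $\{ab\}$, which is bi-ordered by Theorem~\ref{thm:finite-strictly-in-bi-ordered} but not strictly bi-ordered because its minimal \dfa\ (which is $B_\fwd$ of the minimal biautomaton) is not ordered. The passage from ``$B$ ordered'' to ``$B_\fwd$ ordered'' that you spell out is exactly the observation the paper makes just before Corollary~\ref{cor:strictly-bi-ordered-strictly-in-strictly-ordered}.
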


\section{Non-Exiting and Non-Returning Automata}
\label{sec:non-exiting-non-returning}

In this last section we study so called non-exiting automata and
non-returning automata.  
A biautomaton or finite automaton~$A$ is \emph{non-exiting} if all
outgoing transitions from accepting states go to a non-accepting sink
state, and it is \emph{non-returning} if the initial state does not
have any ingoing transitions.  We say that~$A$ is \emph{exiting} if it
is \emph{not} non-exiting, and it is \emph{returning} if it is
\emph{not} non-returning.

In the \dfa\ case non-exiting automata are known to characterize the
class of prefix-free languages, while non-returning automata are
related to suffix-free languages.  However this latter relation is not
a characterization: it is true that every \dfa\ that accepts a
(non-empty) suffix-free language must be non-returning, but the
reverse implication does not hold.

Concerning the situation for biautomata, we first show that every
biautomaton which is non-exiting must also be non-returning (unless it
accepts the empty language).

\begin{lemma}\label{lem:non-exiting-implies-non-returning}
  Let~$A$ be a biautomaton with~$L(A)\neq\emptyset$.  If~$A$ is
  non-exiting, then~$A$ is non-returning.
\end{lemma}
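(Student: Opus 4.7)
The plan is to argue by contradiction, exploiting that non-exiting forces $L(A)$ to be \emph{bifix-free} (both prefix- and suffix-free), which in turn kills every loop through $q_0$ in the two associated \dfa s.

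First I would observe that, since $A$ is non-exiting, both $A_\fwd$ and $A_\bwd$ inherit the non-exiting property, hence $L(A) = L(A_\fwd)$ is prefix-free and $L(A)^R = L(A_\bwd)$ is prefix-free, so $L(A)$ is bifix-free. Without loss of generality $A$ may be assumed minimal, because minimization preserves non-exiting and any ingoing transition at $q_0$ in $A$ induces one in the minimal biautomaton; by~\cite{HoJa14}, $A_\fwd$ and $A_\bwd$ are then the minimal \dfa s of $L(A)$ and $L(A)^R$. A short auxiliary step shows that the minimal \dfa\ of a non-empty bifix-free language is non-returning: a loop $q_0\cdot w=q_0$ with $w\neq\lambda$, combined with any $z\in L(A)$, would yield $wz\in L(A)$, making $z$ a proper suffix of $wz$ and contradicting suffix-freeness. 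Hence neither $A_\fwd$ nor $A_\bwd$ has any ingoing edge at $q_0$.

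The main step is to derive a contradiction from a hypothetical ingoing transition $p\cdot a = q_0$ (the case $p\circ a = q_0$ being symmetric). Write $p=(q_0\cdot u)\circ v$; the $\diamond$-property then gives $(q_0\cdot ua)\circ v = q_0$, equivalently $(q_0\circ v)\cdot ua = q_0$. If $v=\lambda$ this already says $q_0\cdot ua = q_0$, a non-trivial loop on $q_0$ in $A_\fwd$, contradicting the non-returning property derived above. If $v\neq\lambda$, set $q_k := q_0\cdot(ua)^k$ and prove by induction on $k$, using $\diamond$ and the loop equation at each step, that $q_k\circ v^k = q_0$ for all $k\geq 0$. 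Since $Q_\fwd$ is finite, pigeonhole yields $0\leq k_1 < k_2$ with $q_{k_1} = q_{k_2}$; combining the two equations $q_{k_1}\circ v^{k_1} = q_0$ and $q_{k_2}\circ v^{k_2} = q_0$ via the identity $(q\circ x)\circ y = q\circ yx$ gives $q_0\circ v^{k_2-k_1} = q_0$. Since $|v^{k_2-k_1}|\geq 1$, this is a non-trivial loop on $q_0$ in $A_\bwd$, again a contradiction.

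The main obstacle, and the reason for the iteration and pigeonhole step, is precisely the sub-case $p\notin Q_\fwd$ (that is, $v\neq\lambda$): in that case the contradiction cannot be read off directly from $A_\fwd$, and one must transport the loop back to $q_0$ via $A_\bwd$ using the derived identity $q_0\circ v^{k_2-k_1} = q_0$.
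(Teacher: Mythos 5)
Your proof is correct, and its combinatorial heart coincides with the paper's: starting from a mixed return $(q_0\cdot ua)\circ v=q_0$, you iterate it to $(q_0\cdot (ua)^k)\circ v^k=q_0$ and apply pigeonhole to the states $q_0\cdot(ua)^k$ to extract a pure forward or pure backward cycle at $q_0$; your derivation of $q_0\circ v^{k_2-k_1}=q_0$ via $(q\circ x)\circ y=q\circ yx$ is sound. Where you genuinely diverge is in how the contradiction with non-exiting is closed. The paper argues the contrapositive and stays at the level of states: from the extracted cycle and a witness word $w\in L(A)$ it exhibits an accepting state that returns to an accepting state on a nonempty word, so some outgoing transition of an accepting state cannot lead to a non-accepting sink. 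You instead route through language properties: non-exiting is inherited by $A_\fwd$ and $A_\bwd$, forcing $L(A)$ to be bifix-free, and then the classical fact that a \dfa\ with all states reachable accepting a nonempty suffix-free language is non-returning rules out any pure cycle at $q_0$ in either component. This buys a clean separation between the standard \dfa\ fact and the biautomaton-specific pigeonhole step, at the cost of one extra layer; note that the reduction to the minimal biautomaton is harmless but unnecessary, since $A_\fwd$ and $A_\bwd$ have all states reachable by construction, which is all your auxiliary step uses. One shared caveat: like the paper, when you write $p=(q_0\cdot u)\circ v$ you implicitly read ``returning'' as having an ingoing transition at $q_0$ from a \emph{reachable} state, which matches the paper's own usage.
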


\begin{proof}
  We prove the contraposition of the lemma.
  Assume~$A=(Q,\Sigma,\cdot,\circ,q_0,F)$ is a returning biautomaton.
  Then there must be words~$u,v\in\Sigma^*$, with~$\abs{uv}\geq 1$,
  such that $(q_0\cdot u)\circ v = q_0$.  It follows that $(q_0\cdot
  u^n)\circ v^n = (q_0\circ v^n)\cdot u^n = q_0$, for all~$n\geq 0$.
  Since the number of states in~$A$ is finite, there are
  integers~$i,j\geq 0$ and~$x,y\geq 1$ such that $q_0\cdot u^i =
  q_0\cdot u^{i+x}$ and $q_0\circ v^j = q_0\circ v^{j+y}$.  We obtain
  $q_0 = (q_0\cdot u^{i+x})\circ v^i$ and $q_0 = (q_0\circ
  v^{j+y})\cdot u^j$.
  Now let~$w\in L(A)$, i.e., $q_0\cdot w\in F$.  From our
  considerations above we get $((q_0\cdot u^i)\circ v^i)\cdot w \in F$
  and
  \[ ((q_0\cdot u^i)\circ v^i)\cdot w=((q_0\cdot u^{i+x})\circ v^i)\cdot w = (((q_0\cdot u^i)\circ
  v^i)\cdot w)\cdot u^x \in F. \]
  Recall that~$\abs{uv}\geq 1$.  If~$\abs u \geq 1$, then we see
  that~$A$ is exiting because the accepting state $((q_0\cdot
  u^{i+x})\circ v^i)\cdot w$ cannot go to a non-accepting sink state
  on every input symbol.  If~$\abs u = 0$ then it must be~$\abs v\geq
  1$.  Now a similar argumentation gives $((q_0\circ v^j)\cdot
  u^j)\cdot w \in F$ and
  \[ ((q_0\circ v^j)\cdot
  u^j)\cdot w=((q_0\circ v^{j+y})\cdot u^j)\cdot w = (((q_0\circ v^j)\cdot
  u^j)\cdot w)\circ v^y \in F. \]
  Here the accepting state $((q_0\circ v^j)\cdot u^j)\cdot w$ cannot
  go to a non-accepting sink state on every alphabet symbol, which
  shows that~$A$ is exiting.
\end{proof}

The converse of Lemma~\ref{lem:non-exiting-implies-non-returning} is
not true which can easily be seen by the minimal biautomaton for the
language~$\{a,aa\}$.  Since the language is finite, there cannot be a
cycle $q_0 = (q_0\cdot u)\circ v$, hence the biautomaton is
non-returning.  However, since both states $q_0\cdot a$ and $(q_0\cdot
a) \cdot a$ are accepting, the automaton cannot be non-exiting.

Now we study the classes of languages accepted by biautomaton that are
non-exiting or non-returning.
While minimal non-exiting \dfa s characterize the class of prefix-free
languages, we show in the following that minimal non-exiting
biautomata characterize a different language class, namely the class
of circumfix-free languages.  A word~$v\in\Sigma^*$ is a
\emph{circumfix} of a word~$w\in\Sigma^*$, if~$w=w_1w_2w_3$
and~$v=w_1w_3$, for some words~$w_1,w_2,w_3\in\Sigma^*$.  A
language~$L$ is called \emph{circumfix-free} if there are no two
\emph{different} words~$w,v\in L$, such that~$v$ is a circumfix of~$w$.
Notice that prefixes and suffixes of a word are also circumfixes
(where one ``side'' is $\lambda$).  Therefore the class of
circumfix-free languages is contained in both the classes of
prefix-free languages and suffix-free languages.

\begin{theorem}\label{thm:dbia-non-exiting-iff-circumfix-free}
  A regular language is circumfix-free if and only if its \emph{minimal}
  biautomaton is non-exiting.
\end{theorem}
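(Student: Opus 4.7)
My plan is to reduce both directions to the identity from the preliminaries: for any interleaving $(\cdot u_1)(\circ v_1)\cdots(\cdot u_k)(\circ v_k)$ starting at $q_0$, the terminal state lies in $F$ if and only if $u_1 u_2 \cdots u_k v_k \cdots v_1 \in L(A)$. This identity translates computational statements about accepting states directly into membership statements for $L(A)$.

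For the implication \emph{circumfix-free $\Rightarrow$ non-exiting} I argue by contraposition. Assume the minimal biautomaton $A$ is exiting and pick an accepting state $q$ together with a letter $a$ such that $q \cdot a$ (the case $q \circ a$ is symmetric) is not a non-accepting sink. Realize $q$ as the outcome of some reachability sequence $(\cdot u_1)(\circ v_1)\cdots(\cdot u_k)(\circ v_k)$; the identity gives $u := u_1 u_2 \cdots u_k v_k \cdots v_1 \in L(A)$. The key minimality step is that $q\cdot a$ has non-empty right language, since $L_A(q')=\emptyset$ implies $L_A(q'\cdot b)=L_A(q'\circ b)=\emptyset$ for every letter $b$, so the dead states form a closed set under both transition functions and merge into a single non-accepting sink in a minimal biautomaton. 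Picking $w$ with $(q\cdot a)\cdot w \in F$ and extending the reachability sequence with $u_{k+1}=aw$ and $v_{k+1}=\lambda$, the identity produces $u_1\cdots u_k\,(aw)\,v_k\cdots v_1 \in L(A)$, which has $u$ as a proper circumfix. In the symmetric case where $q\circ a$ is not a non-accepting sink, a similar extension (with $u_{k+1}=\lambda$, $v_{k+1}=a$, $u_{k+2}=w$, $v_{k+2}=\lambda$) yields $u_1\cdots u_k\,(wa)\,v_k\cdots v_1 \in L(A)$.

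For the reverse implication \emph{non-exiting $\Rightarrow$ circumfix-free} I again use contraposition. Let $v,w\in L(A)$ be distinct with $v = w_1 w_3$ and $w = w_1 w_2 w_3$ for some nonempty $w_2$. Applying the identity to $v$ decomposed as $w_1\cdot w_3$ shows that $q := (q_0\cdot w_1)\circ w_3$ is accepting, and combined with the $\diamond$-property it also shows that $q\cdot w_2 = (q_0\cdot w_1 w_2)\circ w_3$ is accepting. Writing $w_2 = a w_2'$, the state $q\cdot a$ then has $w_2'$ in its right language, hence is not a non-accepting sink, and so the accepting state $q$ has an exiting forward transition.

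The main obstacle I anticipate is the minimality observation above, namely that in a minimal biautomaton every state other than the (possibly absent) unique non-accepting sink has non-empty right language. Once that is in place, the rest of the argument amounts to bookkeeping with the preliminary identity and a single application of the $\diamond$-property.
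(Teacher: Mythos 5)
Your proposal is correct and follows essentially the same route as the paper: contraposition in both directions, converting reachability-plus-acceptance computations into membership statements to exhibit a circumfix pair, and using the computation $((q_0\cdot w_1)\circ w_3)\cdot w_2$ for the converse. The only difference is that you spell out the minimality step (dead states merge into the unique non-accepting sink, so a non-sink successor of an accepting state has non-empty right language), which the paper compresses into its ``i.e.''\ when restating what it means for $A$ to be exiting.
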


\begin{proof}
  Let $A=(Q,\Sigma,\cdot,\circ,q_0,F)$ be a minimal biautomaton
  and~$L=L(A)$.  First assume that~$A$ is exiting, i.e., there is an
  accepting state~$q\in F$ and a non-empty word~$w\in\Sigma^+$ such
  that~$q\cdot w\in F$.  Since~$q$ must be reachable from the initial
  state of~$A$, there are words~$u,v\in\Sigma^*$ with~$(q_0\cdot
  u)\circ v= q$.  Then both words~$uv$ and~$uwv$ belong to~$L(A)$,
  but~$uv$ is a circumfix of~$w$.  Therefore, if~$L(A)$ is
  circumfix-free then~$A$ must be non-exiting.

  For the reverse implication notice that whenever there are two
  different words~$w$ and~$w'$ in~$L(A)$ such that~$w'$ is a circumfix
  of~$w$, then~$w=w_1w_2w_3$ and~$w'=w_1w_3$,
  with~$w_1,w_3\in\Sigma^*$ and~$w_2\in\Sigma^+$ (because~$w\neq w'$).
  It follows $(q_0\cdot w_1)\circ w_3\in F$ and $((q_0\cdot w_1)\circ
  w_3)\cdot w_2\in F$, and since~$w_2\neq\lambda$ the automaton~$A$
  must be exiting.  Thus, if~$A$ is non-exiting then~$L(A)$ must be
  circumfix-free.
\end{proof}

Now we consider languages accepted non-returning biautomata.  If a
minimal biautomaton~$A$ is non-returning then clearly the contained
minimal \dfa~$A_\fwd$ is non-returning, too.  Therefore the class of
languages accepted by minimal non-returning biautomata is contained in
the class of languages accepted by minimal non-returning \dfa s.
Moreover, this inclusion is strict because the minimal \dfa\ for the
language~$ab^*$ is non-returning, while the minimal biautomaton for
that language is not (it has a backward transition loop for symbol~$b$
on its initial state).  Therefore we have the following result.

\begin{theorem}
  The class of languages accepted by \emph{minimal} non-returning biautomata
  is strictly contained in the class of languages accepted by \emph{minimal}
  non-returning deterministic finite automata. \hfill \qed
\end{theorem}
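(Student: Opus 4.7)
The plan is to prove the two containments separately. For the inclusion direction, I would take a minimal non-returning biautomaton $A=(Q,\Sigma,\cdot,\circ,q_0,F)$ and look at the forward DFA $A_\fwd$. Since the paper already cites from~\cite{HoJa14} that $A_\fwd$ is minimal whenever $A$ is, it suffices to observe that $A_\fwd$ inherits non-returning-ness from $A$: if some forward transition $p\cdot a=q_0$ existed, then $q_0$ would have an ingoing transition in $A$, contradicting that $A$ is non-returning. Thus $L(A)=L(A_\fwd)$ is accepted by a minimal non-returning DFA.

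For strictness I would verify, as already suggested in the paragraph preceding the theorem, that the language $L=ab^*$ witnesses the separation. The minimal DFA for $L$ has three states (initial, accepting loop on $b$, non-accepting sink) and is clearly non-returning. It remains to show that the minimal biautomaton for $L$ is returning, and the cleanest way is to compute right languages and appeal to minimality. Let $A$ be the minimal biautomaton for~$L$, and consider the state $q_0\circ b$. Its right language is $L_A(q_0\circ b)=\{\,w\in\Sigma^*\mid wb\in L\,\}$; since $wb\in ab^*$ forces $wb=ab^n$ with $n\ge 1$, one gets $w=ab^{n-1}$, so $L_A(q_0\circ b)=ab^*=L_A(q_0)$. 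By minimality of $A$ the two states coincide, i.e., $q_0\circ b=q_0$, which gives an ingoing backward transition at the initial state, so $A$ is returning.

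The main obstacle (and really the only nontrivial one) is making sure the right-language computation is carried out with the proper biautomaton semantics: one has to use the characterization $L(A)=\{\,w\mid q_0\cdot w\in F\,\}$ mentioned in the preliminaries, so that $L_A(q)$ equals the forward-DFA language from state $q$, in order to justify the equality $L_A(q_0\circ b)=\{\,w\mid wb\in L\,\}$. Once this is in place, the rest of the argument is a one-line application of minimality, and the theorem follows.
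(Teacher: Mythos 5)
Your proposal is correct and follows essentially the same route as the paper: the inclusion via the minimal non-returning forward \dfa\ $A_\fwd$, and strictness via the witness $ab^*$, whose minimal biautomaton the paper likewise observes has a backward $b$-loop on its initial state. The only difference is that you spell out the right-language computation $L_A(q_0\circ b)=\{\,w\mid wb\in L\,\}=ab^*=L_A(q_0)$ justifying that loop, a detail the paper asserts without proof.
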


\section{Conclusions}
\label{sec:conclusions}

We continued the study of structural properties on biautomata started
in~\cite{HoJa13c,KlPo12a,KlPo12}. Our focus was on the effect of
classical properties of deterministic finite automata such as, e.g.,
permutation-freeness, strongly permutation-freeness, and orderability,
on biautomata. It is shown that this approach on structurally
restricting the recently introduced biautomata model was worth looking
at. A comparison of the induced language families on structurally
restricted deterministic automata and biautomata is given in
Table~\ref{tab:results}. Future research on the subject under
consideration may consist on some further properties such as, e.g.,
biautomata where all states are final or all are initial. In the case
of ordinary deterministic finite automata the family of prefix-closed
languages is obtained by the former property, while the latter gives
the family of suffix-closed languages. Moreover, it would be also
interesting to study, which structural properties can be successfully
applied to nondeterministic biautomata, as introduced
in~\cite{HoJa13a}.

\def\lastmodified{23.12.2008}\def\id#1{#1}

\end{document}